\documentclass[aps, pra, twocolumn, groupedaddress, amsmath, amssymb, floatfix]{revtex4-2}
\usepackage{amsthm}
\usepackage{booktabs}
\usepackage{hyperref}
\usepackage{graphicx}
\usepackage{amssymb}
\usepackage{tikz}
\usetikzlibrary{calc, positioning, shapes.geometric, arrows.meta, backgrounds, fit}
\usepackage{pgfplots}
\pgfplotsset{compat=1.18}
\usepackage{xcolor}
\hypersetup{
    colorlinks=true,
    linkcolor=blue!70!black,
    citecolor=green!60!black,
    urlcolor=blue!70!black,
    bookmarksnumbered=true
}
\theoremstyle{plain}
\newtheorem{theorem}{Theorem}

\theoremstyle{remark}
\newtheorem{remark}{Remark}

\DeclareMathOperator{\tr}{tr}

\begin{document}

\title{Homomorphic Aggregation of Continuous-Variable GKP States}

\author{Nilesh Vyas}
\affiliation{Airbus Central R\&T, Taufkirchen, 82024 Germany}
\date{\today}

\begin{abstract}
Aggregating logical information in continuous-variable quantum networks is essential for distributed quantum architecture. However, direct passive linear optics degrade non-Gaussian Gottesman-Kitaev-Preskill (GKP) grid states via symplectic lattice compression and entanglement-induced decoherence when auxiliary modes are discarded. We present an active, measurement-based continuous-variable network primitive for combining spatially distributed computational-basis payloads. Utilizing GKP Bell states, homodyne measurements, and conditional feed-forward phase-space displacements, we construct a completely positive trace-preserving (CPTP) map that evaluates a logical XOR operation on computational-basis inputs. We bound the Heisenberg action of the physical finite-squeezing channel on logical Pauli generators, demonstrate measurement-specific homodyne-outcome hiding for continuous-variable quantum one-time pads (CV-OTP) with an explicit prefactor bound $D_{\mathrm{TV}} \le 1.60 e^{-r}$, and evaluate logical success probabilities under physical optical loss and network scaling constraints.
\end{abstract}

\maketitle

\section{Introduction}
\label{sec:intro}

Continuous-variable (CV) quantum networks and distributed quantum computation require robust encoding schemes to protect discrete logical information within infinite-dimensional Hilbert spaces. The Gottesman-Kitaev-Preskill (GKP) code \cite{Gottesman2001} is a primary candidate for fault-tolerant encoding, utilizing non-local superpositions of position and momentum eigenstates to form a stabilizing phase-space lattice. This translational symmetry intrinsically protects quantum data against low-weight displacement noise, making it highly attractive for near-term scalable quantum hardware \cite{Terhal2020, CampagneIbarcq2020, Baragiola2019}.

%A significant challenge in scaling multi-node quantum networks is the homomorphic aggregation of spatially distributed, encrypted states within continuous-variable architectures \cite{Larsen2021}. In the standard Gaussian coherent-state paradigm, encrypted inputs can be aggregated via a passive linear optical network. For two spatial modes, $1$ and $2$, a 50:50 beam-splitter unitary $\hat{B}_{12}$ acts linearly on coherent states: $\hat{B}_{12}(|\alpha_1\rangle_1 \otimes |\alpha_2\rangle_2) = |(\alpha_1 + \alpha_2)/\sqrt{2}\rangle_1 \otimes |(\alpha_1 - \alpha_2)/\sqrt{2}\rangle_2$. Because coherent states do not entangle under this quadratic interaction, applying a partial trace over the difference mode isolates a pure, aggregated coherent state. 

A major challenge in multi-node quantum networks is homomorphically aggregating spatially distributed states \cite{Larsen2021}. While passive linear optics cleanly aggregate Gaussian coherent states, where tracing out the difference mode yields a pure state, this fails for non-Gaussian encodings. Like rotation-symmetric codes (e.g., Cat or Binomial codes), GKP states become entangled under beam-splitter dynamics, leading to severe entropy injection upon tracing out spatial modes.

This passive approach encounters physical barriers when applied directly to discrete logical qubits encoded in grid states \cite{Bartlett2002}. A 50:50 beam splitter scales the phase-space lattice quadratures by $1/\sqrt{2}$, compressing the standard square-lattice GKP stabilizer generators' translational period from $2\sqrt{\pi}$ to $\sqrt{2\pi}$, thereby pushing the mapped quadratures outside the stabilized logical subspace. Furthermore, highly non-classical, non-Gaussian GKP states become rigorously entangled under beam-splitter dynamics; tracing out auxiliary spatial modes  injects entropy and decoheres the logical payload into a mixed state.

While continuous-variable cluster states and measurement-based quantum computing (MBQC) teleportation protocols are well-established for universal computation within centralized fault-tolerant quantum processors \cite{Alexander2014, Larsen2021}, the problem of routing and aggregating independently encrypted spatial modes across a distributed network requires a dedicated continuous-variable routing architecture. 

In this paper, we establish a theoretical framework for CV quantum payload aggregation that combines $n$ independent GKP computational-basis inputs while bypassing the limitations of direct passive linear optical mixing. Our specific contributions are fourfold: 

First, we  analyze the failure modes of direct passive spatial GKP aggregation, driven by deterministic symplectic lattice compression and entanglement-induced entropy injection when auxiliary modes are traced out. Second, we construct an active, measurement-based homomorphic router using a sequence of coherent CV SUM gates and homodyne detections, preserving the logical code-space geometry for computational-basis payloads. Third, we prove that this measurement channel acts as a measurement-specific homodyne-outcome hiding mechanism, masking the logical payload from intermediate routing nodes with an explicit bound $D_{\mathrm{TV}} \le 1.60 e^{-r}$. Finally, we analyze finite-squeezing noise and optical loss, demonstrating that for the sequential fusion architecture considered here, a logarithmic binary-tree routing topology minimizes circuit depth and suppresses cumulative temporal transmission loss.

\section{Context and Related Work}
\label{sec:context}

To contextualize our contribution, we  distinguish the continuous-variable (CV) network primitive proposed here from existing paradigms in both classical computer science and quantum information processing.

\paragraph{Measurement-Based Quantum State Aggregation}
We define \textit{measurement-based quantum state aggregation} as a completely positive trace-preserving (CPTP) map $\mathcal{E}: \mathcal{B}(\mathcal{H}_1 \otimes \dots \otimes \mathcal{H}_n) \to \mathcal{B}(\mathcal{H}_{\text{out}})$ that maps $n$ spatially distributed, individually encrypted quantum states into a single target logical payload Hilbert space without intermediate decryption. Unlike classical state-space reduction or quantum multi-signature schemes \cite{Wei2024}, this framework realizes the spatial fusion of non-Gaussian continuous-variable states across independent network nodes.

\paragraph{Analogy to Lattice-Based Cryptographic Frameworks}
Structurally, our continuous-variable aggregation framework serves as a physical analogue to classical lattice-based homomorphic commitment schemes, such as the BDLOP framework \cite{BDLOP}. While classical lattice schemes encode discrete message vectors onto algebraic module lattices over finite rings ($\mathbb{Z}_q^d$) using additive bounded noise, GKP codes embed discrete logical qubits into continuous two-dimensional real phase space ($\mathbb{R}^2$) protected by translational lattice symmetries. A central challenge in both settings is noise propagation under homomorphic addition: direct sequential fusion of physical GKP states accumulates additive Gaussian displacement variance scaling as $\mathcal{O}(n)$, whereas hierarchical tree topologies restrict this growth to $\mathcal{O}(\log n)$. Controlling this variance growth mandates intermediate GKP error correction and soft-decision syndrome extraction at network vertices to periodically reset the physical noise floor.

\paragraph{Quantum Homomorphic Encryption (QHE)}
Quantum Homomorphic Encryption (QHE) and its continuous-variable extensions (CV-QHE) \cite{Broadbent2015, Marshall2016} enable computation on encrypted quantum data. Rigorous bounds dictate that information-theoretically secure, fully homomorphic quantum encryption schemes for arbitrary circuits are impossible without exponential communication overhead \cite{Lai2018}. While CV-QHE frameworks target generic circuit evaluation on centralized quantum processors, distributed spatial aggregation introduces unique symplectic constraints. Rather than claiming unconditional security against arbitrary quantum adversarial measurements, our protocol operates under a localized network threat model: it guarantees asymptotic homodyne-outcome hiding (Theorem~\ref{thm:crypto_hiding}) against intermediate routing nodes possessing only classical measurement records.

\paragraph{Distributed MBQC and Teleportation Architectures}
GKP grid states and CV cluster states are established resources for measurement-based quantum computing (MBQC) \cite{Menicucci2006, Alexander2014, Larsen2021}. While MBQC teleportation circuits are well understood, adapting them to non-destructive, spatially distributed GKP state aggregation requires resolving specific non-Gaussian failure modes and establishing analytical bounds on physical noise propagation and homodyne outcome leakage.

\section{Limitations of Direct Passive Linear Optics}
\label{sec:limitations}

We formulate the GKP code space to  analyze the failure modes of passive state aggregation. The ideal single-mode square-lattice GKP code is stabilized by the commutative displacement operators $\hat{S}_q = \exp(i 2\sqrt{\pi} \hat{p})$ and $\hat{S}_p = \exp(-i 2\sqrt{\pi} \hat{q})$, where the quadrature operators satisfy $[\hat{q}, \hat{p}] = i$ (setting $\hbar = 1$).

The ideal logical basis states, $|0\rangle_L$ and $|1\rangle_L$, are infinite-energy Dirac combs over position eigenstates $|q\rangle_q$:
\begin{equation}
|0\rangle_L = \sum_{s \in \mathbb{Z}} |2s\sqrt{\pi}\rangle_q, \quad |1\rangle_L = \sum_{s \in \mathbb{Z}} |(2s+1)\sqrt{\pi}\rangle_q.
\end{equation}
The physical inter-codeword distance is $d_{\mathrm{min}} = \sqrt{\pi}$, establishing a maximum correctable displacement error boundary (Voronoi cell radius) of $q_{\mathrm{thresh}} = d_{\mathrm{min}}/2 = \sqrt{\pi}/2 \approx 0.8862$.

\subsection{Symplectic Scaling Limits}
\label{subsec:symplectic_scaling}

Passive linear optical operations correspond to orthogonal symplectic transformations $\mathbf{S} \in \mathrm{Sp}(2n, \mathbb{R}) \cap \mathrm{O}(2n)$ \cite{Weedbrook2012}. For a two-mode 50:50 beam splitter, the symplectic transformation on the joint quadrature vector $\hat{\mathbf{x}} = (\hat{q}_1, \hat{p}_1, \hat{q}_2, \hat{p}_2)^T$ is given by:
\begin{equation}
\hat{\mathbf{x}}_{\mathrm{out}} = \mathbf{S}_{\mathrm{BS}} \hat{\mathbf{x}} = \frac{1}{\sqrt{2}} \begin{pmatrix} I_2 & I_2 \\ I_2 & -I_2 \end{pmatrix} \hat{\mathbf{x}},
\end{equation}
where $I_2$ is the $2 \times 2$ identity matrix. While the orthogonal matrix $\mathbf{S}_{\mathrm{BS}}$ globally preserves the $4$-dimensional Euclidean norm in phase space, extracting the aggregated quantum payload via spatial reduction to mode 1 transforms the single-mode quadratures according to $\hat{q}_{\mathrm{out},1} = (\hat{q}_1 + \hat{q}_2)/\sqrt{2}$ and $\hat{p}_{\mathrm{out},1} = (\hat{p}_1 + \hat{p}_2)/\sqrt{2}$.

Consequently, the effective local inter-codeword distance contracts by a factor of $1/\sqrt{2}$, reducing the post-interaction distance to $d_{\mathrm{min}}^{(1)} = \sqrt{\pi/2}$. This compresses the maximum correctable noise threshold to $q_{\mathrm{thresh}}^{(1)} = \sqrt{\pi}/(2\sqrt{2}) \approx 0.6266$. Because $q_{\mathrm{thresh}}^{(1)} < q_{\mathrm{thresh}} = \sqrt{\pi}/2$, direct passive mixing compresses the GKP stabilizer lattice, pushing the state outside the stabilized logical subspace \cite{Gottesman2001, Terhal2020}.

\subsection{Inter-Mode Entanglement and Entropy Injection}
\label{subsec:entropy_injection}

While spatial reduction via partial tracing preserves state purity for Gaussian coherent states, this structural preservation fails for non-Gaussian GKP codewords.

% Applying the beam-splitter unitary operator $\hat{B}_{12} = \exp[\frac{\pi}{4}(\hat{a}_1^\dagger \hat{a}_2 - \hat{a}_1 \hat{a}_2^\dagger)]$ to a separable product of GKP codewords $|\Psi_{\mathrm{in}}\rangle = |\psi_1\rangle_L \otimes |\psi_2\rangle_L$ yields a non-trivially entangled state $|\Psi_{12}\rangle$ \cite{Kim2002}. Expressed via the Schmidt decomposition, $|\Psi_{12}\rangle = \sum_{k} \lambda_k |u_k\rangle_1 \otimes |v_k\rangle_2$, the Schmidt rank  exceeds unity ($d > 1$). Tracing out mode 2 yields a reduced state $\rho_{\mathrm{out},1} = \tr_2(|\Psi_{12}\rangle\langle\Psi_{12}|)$ with non-zero von Neumann entropy $S(\rho_{\mathrm{out},1}) = -\tr(\rho_{\mathrm{out},1} \ln \rho_{\mathrm{out},1}) > 0$. This entropy injection reduces the logical state purity, $\tr(\rho_{\mathrm{out},1}^2) < 1$, physically corrupting the logical qubit and demonstrating the failure of direct passive linear-optical aggregation for the square-lattice GKP encoding considered here.

Applying a 50:50 beam splitter $\hat{B}_{12} = \exp[\frac{\pi}{4}(\hat{a}_1^\dagger \hat{a}_2 - \hat{a}_1 \hat{a}_2^\dagger)]$ to separable GKP input states $|\Psi_{\mathrm{in}}\rangle = |\psi_1\rangle_L \otimes |\psi_2\rangle_L$ generates inter-mode entanglement $|\Psi_{12}\rangle = \sum_{k} \lambda_k |u_k\rangle_1 \otimes |v_k\rangle_2$ with Schmidt rank $d > 1$ \cite{Kim2002}. Tracing out mode 2 produces a reduced state $\rho_{\mathrm{out},1} = \tr_2(|\Psi_{12}\rangle\langle\Psi_{12}|)$ with von Neumann entropy $S(\rho_{\mathrm{out},1}) = -\tr(\rho_{\mathrm{out},1} \ln \rho_{\mathrm{out},1}) > 0$ and purity $\tr(\rho_{\mathrm{out},1}^2) < 1$, demonstrating that passive optical mixing decoheres square-lattice GKP logical states.

\begin{figure*}[t]
\centering
\resizebox{\linewidth}{!}{%
\begin{tikzpicture}[
    scale=1.0, >=Stealth,
    qwire/.style={thick, color=blue!75!black},
    cwire/.style={double, double distance=1.4pt, thick, color=black},
    bs/.style={
        rectangle, draw=black, thick, fill=blue!12, inner sep=0pt,
        path picture={
            \draw[blue!80!black, line width=1.2pt] 
                (path picture bounding box.south west) -- (path picture bounding box.north east);
        }
    },
    meter/.style={
        rectangle, draw=black, thick, fill=white, minimum width=0.9cm, minimum height=0.7cm, inner sep=0pt
    },
    controlbox/.style={
        rectangle, draw=black, thick, fill=orange!12, rounded corners=4pt,
        minimum width=4.2cm, minimum height=1.4cm, align=center, font=\small
    },
    disp/.style={
        rectangle, draw=black, thick, fill=red!12, rounded corners=4pt,
        minimum width=2.6cm, minimum height=0.9cm, align=center, font=\small
    },
    stagebox/.style={
        draw=black!30, dashed, fill=black!2, rounded corners=6pt
    },
    stageheader/.style={
        font=\bfseries\sffamily\small, color=black!85
    }
]

    % STAGE 1
    \node (psi1) at (0.0, 4.5) {$|\psi_1\rangle_L$};
    \node (vacA) at (0.0, 3.0) {$|0\rangle_{\text{GKP}, A}$};
    \node (vacB) at (0.0, 1.5) {$|0\rangle_{\text{GKP}, B}$};
    \node (psi2) at (0.0, 0.0) {$|\psi_2\rangle_L$};
    
    \node[bs, minimum width=0.7cm, minimum height=2.1cm] (bs1) at (1.6, 2.25) {};
    \node[font=\scriptsize\bfseries, color=black!60, above=2pt of bs1] {BS};

    \draw[qwire] (vacA.east) -- ($(bs1.west)+(0, 0.75)$);
    \draw[qwire] (vacB.east) -- ($(bs1.west)+(0, -0.75)$);

    \node[font=\scriptsize, color=blue!80!black, fill=white, draw=black!20, inner sep=2pt, rounded corners=2pt] 
        at (2.5, 2.25) {$|\Phi^+\rangle_{L, AB}$};

    % STAGE 2 & 3
    \node[bs, minimum width=0.7cm, minimum height=2.1cm] (bs2) at (3.8, 3.75) {};
    \node[font=\scriptsize\bfseries, color=black!60, above=2pt of bs2] {BS};

    \coordinate (swapA) at (5.5, 3.0);
    \coordinate (swapB) at (5.5, 1.5);
    \draw[qwire, line width=1.2pt] (swapA) -- (swapB);
    
    \draw[thick, color=blue!75!black] ($(swapA)+(-3.5pt,-3.5pt)$) -- ($(swapA)+(3.5pt,3.5pt)$);
    \draw[thick, color=blue!75!black] ($(swapA)+(-3.5pt,3.5pt)$) -- ($(swapA)+(3.5pt,-3.5pt)$);
    
    \draw[thick, color=blue!75!black] ($(swapB)+(-3.5pt,-3.5pt)$) -- ($(swapB)+(3.5pt,3.5pt)$);
    \draw[thick, color=blue!75!black] ($(swapB)+(-3.5pt,3.5pt)$) -- ($(swapB)+(3.5pt,-3.5pt)$);
    
    \node[font=\tiny\bfseries, color=blue!80!black, above=3pt] at (swapA) {SWAP};

    \node[bs, minimum width=0.7cm, minimum height=2.1cm] (bs3) at (6.8, 0.75) {};
    \node[font=\scriptsize\bfseries, color=black!60, above=2pt of bs3] {BS};

    % HOMODYNE DETECTORS
    \node[meter] (m1) at (8.2, 4.5) {};
    \draw[thick] ($(m1)+(-0.25,-0.15)$) arc (180:0:0.25cm);
    \draw[thick, ->] ($(m1)+(0,-0.15)$) -- ($(m1)+(0.15,0.15)$);
    \node[font=\small, above=2pt of m1] {$\hat{q}_1$};

    \node[meter] (m2) at (8.2, 1.5) {};
    \draw[thick] ($(m2)+(-0.25,-0.15)$) arc (180:0:0.25cm);
    \draw[thick, ->] ($(m2)+(0,-0.15)$) -- ($(m2)+(0.15,0.15)$);
    \node[font=\small, above=2pt of m2] {$\hat{p}_A$};

    \node[meter] (m3) at (8.2, 0.0) {};
    \draw[thick] ($(m3)+(-0.25,-0.15)$) arc (180:0:0.25cm);
    \draw[thick, ->] ($(m3)+(0,-0.15)$) -- ($(m3)+(0.15,0.15)$);
    \node[font=\small, above=2pt of m3] {$\hat{q}_2$};

    \draw[qwire] (psi1.east) -- ($(bs2.west)+(0, 0.75)$);
    \draw[qwire] ($(bs1.east)+(0, 0.75)$) -- ($(bs2.west)+(0, -0.75)$);
    
    \draw[qwire] ($(bs2.east)+(0, 0.75)$) -- (m1.west);
    \draw[qwire] ($(bs2.east)+(0, -0.75)$) -- (swapA);
    \draw[qwire] (swapB) -- ($(bs3.west)+(0, 0.75)$);
    
    \draw[qwire] (psi2.east) -- ($(bs3.west)+(0, -0.75)$);
    \draw[qwire] ($(bs3.east)+(0, 0.75)$) -- (m2.west);
    \draw[qwire] ($(bs3.east)+(0, -0.75)$) -- (m3.west);

    \draw[qwire] ($(bs1.east)+(0, -0.75)$) -- (swapB);
    \draw[qwire] (swapA) -- (12.2, 3.0);

    % STAGE 4
    \node[controlbox] (ctrl) at (13.8, 0.75) {\textbf{Classical Control}\\[2pt]$\mathbf{m}_\Sigma = f(m_1, m_A, m_2)$};

    \node[disp] (disp) at (13.8, 3.0) {$\hat{D}_B(-\mathbf{m}_\Sigma)$};

    \draw[qwire] (12.2, 3.0) -- (disp.west);

    \draw[cwire] (m1.east) -- (10.0, 4.5) |- ([yshift=14pt]ctrl.west);
    \node[font=\small, fill=white, inner sep=2pt, draw=black!20, rounded corners=2pt] at (9.4, 4.5) {$m_1$};

    \draw[cwire] (m2.east) -- (10.4, 1.5) |- ([yshift=0pt]ctrl.west);
    \node[font=\small, fill=white, inner sep=2pt, draw=black!20, rounded corners=2pt] at (9.8, 1.5) {$m_A$};

    \draw[cwire] (m3.east) -- (10.8, 0.0) |- ([yshift=-14pt]ctrl.west);
    \node[font=\small, fill=white, inner sep=2pt, draw=black!20, rounded corners=2pt] at (10.2, 0.0) {$m_2$};

    \draw[cwire, -{Implies}] (ctrl.north) -- (disp.south);

    \draw[qwire, ->] (disp.east) -- (16.5, 3.0) 
        node[anchor=west] {$|\psi_1 \oplus \psi_2\rangle_L$};

    % BOUNDING BOXES
    \begin{scope}[on background layer]
        \node[stagebox, fit={(-0.5, -0.7) (2.8, 5.3)}] (st1) {};
        \node[stagebox, fit={(3.0, -0.7) (9.0, 5.3)}] (st2) {};
        \node[stagebox, fit={(9.2, -0.7) (18.2, 5.3)}] (st3) {};
    \end{scope}

    \node[stageheader] at (1.15, 5.6) {1. Resource Generation};
    \node[stageheader] at (6.00, 5.6) {2 \& 3. Cluster Formation \& CV-BSM};
    \node[stageheader] at (13.70, 5.6) {4. Classical Control \& Output};

\end{tikzpicture}%
}
\caption{Continuous-variable measurement-based quantum state aggregation protocol for $n=2$ logical inputs. Stage 1 generates a Gottesman-Kitaev-Preskill (GKP) Bell pair $|\Phi^+\rangle_{L, AB}$ by interfering two single-mode GKP qunaught states $|0\rangle_{\text{GKP}, A}$ and $|0\rangle_{\text{GKP}, B}$ at a 50:50 beam splitter (BS). In Stages 2 and 3, Data Mode 1 ($|\mu_1\rangle_L$) and Data Mode 2 ($|\mu_2\rangle_L$) couple to Ancilla Mode A via sequential beam splitters and a spatial SWAP gate, followed by homodyne detection of quadratures $\hat{q}_1, \hat{p}_A$, and $\hat{q}_2$. Stage 4 computes the net classical feed-forward displacement vector $\mathbf{m}_\Sigma = f(m_1, m_A, m_2)$, applying $\hat{D}_B(-\mathbf{m}_\Sigma)$ to unmeasured Ancilla Mode B to recover the aggregated logical state $|\mu_1 \oplus \mu_2\rangle_L$.}
\label{fig:gkp_homomorphic_aggregation}
\end{figure*}

\section{Quantum Homomorphic Aggregation via MBQC Teleportation}
\label{sec:mbqc_aggregation}

To achieve homomorphic state aggregation without destroying the non-Gaussian structure of the GKP code, we construct the network primitive using measurement-based quantum computing (MBQC) teleportation circuits.

\subsection{The Ideal Continuous-Variable SUM Gate}
\label{subsec:sum_gate}
In continuous-variable quantum information, the foundational two-mode entangling operation is the Gaussian $\mathrm{SUM}$ gate, $\hat{C}_X = \exp(-i\hat{q}_1\hat{p}_2)$, serving as the physical analogue to the discrete logical $\mathrm{CNOT}$ gate. Under ideal unitary operation, this gate maps two input modes to two output modes, generating the following Heisenberg evolution:
\begin{align}
    \hat{q}_1' = \hat{q}_1, &\quad \hat{p}_1' = \hat{p}_1 - \hat{p}_2, \\
    \hat{q}_2' = \hat{q}_1 + \hat{q}_2, &\quad \hat{p}_2' = \hat{p}_2.
\end{align}
Because this continuous phase-space translation  aligns with the periodic symmetry of the GKP code space, the physical addition executes an  modulo-2 logic gate on the discrete payload. Specifically, the unitary evaluates to $\hat{C}_X(|\mu_1\rangle_L \otimes |\mu_2\rangle_L) = |\mu_1\rangle_L \otimes |\mu_1 \oplus \mu_2\rangle_L$. While this ideal unitary operation is a 2-to-2 mapping that preserves the control mode, our distributed network protocol realizes a destructive 2-to-1 spatial reduction that evaluates the  logical XOR in the ideal computational-basis model.

\subsection{Distributed Teleportative Fusion Protocol}
\label{subsec:protocol}
In a spatially distributed network, direct Hamiltonian coupling of remote modes is impossible. Furthermore, network routing requires aggregating distributed payloads into a single target node while  destructing the intermediate source modes. We achieve this homomorphic aggregation via the following measurement-based continuous-variable protocol (visualized in Fig.~\ref{fig:gkp_homomorphic_aggregation}):

\begin{enumerate}
    \item \textbf{Source Encryption (CV-OTP):} Prior to network transmission, the independent source nodes protect their discrete logical inputs ($|\mu_1\rangle_L$ and $|\mu_2\rangle_L$) by applying a continuous-variable one-time pad. Each state undergoes a random phase-space translation dictated by a uniformly sampled, independent continuous displacement mask (e.g., $q_{\mathrm{enc}} \sim \mathcal{U}[-\sqrt{\pi}, \sqrt{\pi})$).
    \item \textbf{Resource Generation:} The network prepares an offline, finite-energy continuous-variable EPR resource state (a GKP Bell pair) across Ancilla Modes A and B. This establishes the initial quadrature correlations $\hat{q}_A - \hat{q}_B = \hat{\xi}_{q,\mathrm{EPR}}$ and $\hat{p}_A + \hat{p}_B = \hat{\xi}_{p,\mathrm{EPR}}$.
    \item \textbf{Cluster Formation:} The encrypted data modes route to the intermediate vertex. They sequentially interfere with Ancilla Mode A via 50:50 beam splitters, entangling the masked input payloads with the ancilla rail.
    \item \textbf{Measurement (CV-BSM):} The vertex executes destructive homodyne detection on the three interacting output ports ($\hat{q}_1$, $\hat{p}_A$, $\hat{q}_2$). This continuous-variable Bell-state measurement (CV-BSM) extracts a set of classical, continuous quadrature syndromes.
    \item \textbf{Classical Feed-Forward:} A classical controller computes the linear algebraic combination of the measured syndromes to form a collective displacement vector $\mathbf{m}_\Sigma$. The target node holding the unmeasured Ancilla Mode B applies the local conditional displacement $\hat{D}_B(-\mathbf{m}_\Sigma)$.
\end{enumerate}

This sequence destructively consumes the input modes while teleporting the combined logical payload $|\mu_1 \oplus \mu_2\rangle_L$ onto the unmeasured target rail.

\subsection{Logical Action on the Target GKP Lattice}
\label{subsec:logical_action}
To demonstrate that this continuous-variable interaction executes an  homomorphic addition, we map the Heisenberg evolution of the position operators onto the discrete GKP lattice support. As derived in Appendix~\ref{app:heisenberg_evolution}, the teleportation protocol maps the target position operator to:
\begin{equation}
    \hat{q}_{\mathrm{target}} = \hat{q}_1 + \hat{q}_2 - \hat{\xi}_{q,\mathrm{EPR}}.
\end{equation}

For ideal computational basis inputs $|\mu_1\rangle_L$ and $|\mu_2\rangle_L$, the probability amplitudes are supported  on position eigenstates corresponding to the sub-lattices $q_j = (2s_j + \mu_j)\sqrt{\pi}$, where $s_j \in \mathbb{Z}$ and $\mu_j \in \{0, 1\}$. Evaluating the sum of the ideal lattice supports yields:
\begin{equation}
    q_{\mathrm{ideal}} = (2s_1 + \mu_1)\sqrt{\pi} + (2s_2 + \mu_2)\sqrt{\pi}.
\end{equation}

In standard binary arithmetic, the sum of two logical bits decomposes into a carry bit $c \in \{0, 1\}$ and a modulo-2 sum: $\mu_1 + \mu_2 = 2c + (\mu_1 \oplus \mu_2)$. Substituting this expansion isolates the logical payload from the lattice translation:
\begin{equation}
    q_{\mathrm{ideal}} = \left(2(s_1 + s_2 + c) + \mu_1 \oplus \mu_2\right)\sqrt{\pi}.
\end{equation}
Defining a new lattice integer $S = s_1 + s_2 + c$, the ideal support simplifies to $q_{\mathrm{ideal}} = (2S + \mu_1 \oplus \mu_2)\sqrt{\pi}$. Therefore, the continuous physical addition of position operators inherently maps the state onto the  GKP sub-lattice corresponding to the modulo-2 logical sum $|\mu_1 \oplus \mu_2\rangle_L$.

In the physical finite-energy channel $\mathcal{E}_{\mathrm{SUM}}$, the residual quantum noise $\hat{\xi}_{q,\mathrm{EPR}}$ shifts the target state off the ideal lattice points:
\begin{equation}
    \hat{q}_{\mathrm{target}} = (2S + \mu_1 \oplus \mu_2)\sqrt{\pi} - \hat{\xi}_{q,\mathrm{EPR}}.
\end{equation}
The logical operation avoids a discrete Pauli fault if and only if this continuous displacement noise remains bounded within the correctable GKP Voronoi cell, $|\hat{\xi}_{q,\mathrm{EPR}}| < \sqrt{\pi}/2$.

\subsection{Decoherence Limits and Classical Payload Constraints}
\label{subsec:decoherence}

The 2-to-1 spatial aggregation map reduces the bipartite input space $\mathcal{H}_1 \otimes \mathcal{H}_2$ to a single output mode $\mathcal{H}_{\mathrm{target}}$ via a partial trace over the control register. While tracing out a subsystem of an entangled state reduces state purity, restricting the input domain to computational-basis states prevents inter-mode entanglement generation during the SUM operation, preserving state purity across the spatial reduction.

However, the protocol circumvents this decoherence penalty because it is scoped as a quantum-secure homomorphic evaluator for computational-basis payloads. By restricting inputs to discrete computational basis states, the protocol avoids generating the entanglement that leads to mixedness during the partial trace. We  emphasize: the 2-to-1 map computes the logical XOR of computational-basis GKP states; it does not preserve arbitrary quantum superpositions when one register is discarded.

To formalize this logical purity preservation, we evaluate the ideal continuous-variable mapping. By temporarily isolating the ideal channel from the finite-energy hardware constraints, we mathematically decouple the fundamental informational entropy (induced by the spatial partial trace) from the physical continuous-variable entropy (induced by finite optical squeezing, subsequently evaluated in Sec.~\ref{sec:channel_integrity}).

Under ideal operation, the MBQC protocol executes an effective logical SUM gate ($\hat{C}_X$) across the input modes, mapping $|\mu_1\rangle_L \otimes |\mu_2\rangle_L \to |\mu_1\rangle_L \otimes |\mu_1 \oplus \mu_2\rangle_L$. The 2-to-1 spatial reduction corresponds mathematically to a partial trace over the effective control mode:
\begin{equation}
\mathcal{E}_{\mathrm{agg}}(\rho_{\mathrm{in}}) = \mathrm{Tr}_1 \left[ \hat{C}_X \rho_{\mathrm{in}} \hat{C}_X^\dagger \right].
\end{equation}

For a generic separable input superposition, $|\Psi_{\mathrm{in}}\rangle_L = (\alpha|0\rangle_L + \beta|1\rangle_L) \otimes |\mu_2\rangle_L$, the $\hat{C}_X$ operator generates a logically entangled state. Tracing out Mode 1 annihilates the off-diagonal coherences ($\alpha\beta^*$), reducing the target to a mixed state with  positive von Neumann entropy:
\begin{equation}
\mathcal{E}_{\mathrm{agg}}(|\Psi_{\mathrm{in}}\rangle\langle\Psi_{\mathrm{in}}|) = |\alpha|^2 |\mu_2\rangle_L\langle\mu_2| + |\beta|^2 |\bar{\mu}_2\rangle_L\langle\bar{\mu}_2|.
\end{equation}

Conversely, when the inputs are restricted to computational basis states, $\rho_{\mathrm{in}} = |\mu_1\rangle_L\langle\mu_1| \otimes |\mu_2\rangle_L\langle\mu_2|$, the $\hat{C}_X$ gate evaluates deterministically without generating superposition-based entanglement. The partial trace isolates the target mode  without injecting entropy:
\begin{equation}
\mathcal{E}_{\mathrm{agg}}(|\mu_1\rangle_L\langle\mu_1| \otimes |\mu_2\rangle_L\langle\mu_2|) = |\mu_1 \oplus \mu_2\rangle_{L}\langle\mu_1 \oplus \mu_2|.
\end{equation}
Unlike passive linear optics (Sec.~\ref{subsec:entropy_injection}), this active MBQC topology structurally isolates and preserves the  logical purity of the computational basis during spatial reduction.

\section{Cryptographic Syndrome Hiding}
\label{sec:crypto_hiding}
In the measurement-based routing architecture, intermediate network nodes perform continuous-variable Bell-state measurements, extracting the raw classical measurement vector $\mathbf{m}$ to compute the feed-forward syndrome $\mathbf{m}_\Sigma$. The threat model assumes these intermediate routers are ``honest-but-curious'': they faithfully execute the homodyne measurements and classical feed-forward, but may attempt to deduce the logical payload from these classical measurement records.

Because the router accesses only the classical homodyne outcomes, the security requirement maps to \textit{measurement-specific homodyne-outcome hiding}. We prove that the probability distribution of the extracted syndrome is asymptotically independent of the encoded logical state.

As formalized in Step 1 of the routing protocol (Sec.~\ref{subsec:protocol}), the source nodes protect the input states prior to transmission using a continuous-variable one-time pad (CV-OTP). Consequently, the intermediate router is forced to evaluate the homomorphic sum on these uniformly masked quadratures without access to the decryption key.

\begin{theorem}[Measurement-Specific Homodyne Hiding]
\label{thm:crypto_hiding}
For a finite-energy GKP state with squeezing parameter $r$, homodyne syndrome extraction under a uniformly sampled continuous displacement mask $q_{\mathrm{enc}} \sim \mathcal{U}[-\sqrt{\pi}, \sqrt{\pi})$ achieves asymptotic homodyne-outcome hiding. The total variation distance between classical syndrome probability distributions for any orthogonal logical states is bounded by:
\begin{equation}
D_{\mathrm{TV}}(P_0, P_1) \le C e^{-r},
\end{equation}
where $C \le 1.60$ for all physical squeezing levels $r \ge 5.0~\mathrm{dB}$.
\end{theorem}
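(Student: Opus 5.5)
We model the finite-energy GKP logical state $|\mu\rangle_L$ in the position quadrature as a Gaussian-enveloped comb. Its position density is
\begin{equation}
f_\mu(q) \propto \sum_{s\in\mathbb{Z}} e^{-(q-(2s+\mu)\sqrt{\pi})^2/(2\sigma^2)}\, w(q),
\end{equation}
with peak variance $\sigma^2 = e^{-2r}/2$ and a broad envelope $w$. The CV-OTP shifts $q$ by an independent $q_{\mathrm{enc}}\sim\mathcal{U}[-\sqrt{\pi},\sqrt{\pi})$. The router's syndrome density is therefore the convolution $P_\mu = f_\mu * u$, where $u$ is the uniform density of width $2\sqrt{\pi}$. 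Because the ideal comb is $2\sqrt{\pi}$-periodic and the mask window has exactly one period's length, we aim to show that $P_0$ and $P_1$ both coincide with a common reference density up to an error controlled by $\sigma$. The triangle inequality then bounds $D_{\mathrm{TV}}(P_0,P_1)=\tfrac12\|P_0-P_1\|_1$.

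The key steps are as follows.
\begin{enumerate}
\item Decompose $f_\mu$ into its individual Gaussian peaks, each carrying weight $c_s$ set by the envelope. Convolving one peak centred at $x_s$ with $u$ gives the difference of two error functions, $\tfrac{1}{2\sqrt{\pi}}[\Phi((q-x_s+\sqrt{\pi})/\sigma)-\Phi((q-x_s-\sqrt{\pi})/\sigma)]$.
\item Sum these over the comb. Away from the window edges, a smoothed boxcar tiled with period $2\sqrt{\pi}$ is nearly flat. The only $\mu$-dependence comes from the edge regions of width $O(\sigma)$ around the points $x_s\pm\sqrt{\pi}$.
\item Compare the two codewords. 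Shifting the comb by $\sqrt{\pi}$ maps the peaks of $|0\rangle_L$ onto those of $|1\rangle_L$. Hence $P_0-P_1$ is a sum of edge-localized terms, each integrable with $L^1$ mass at most $\int|\Phi(x/\sigma)-\mathbb{1}[x>0]|\,dx = 2\sigma\int_0^\infty \operatorname{erfc}$-type tails $=\sigma\sqrt{2/\pi}$ per edge, times the local peak weight divided by $2\sqrt{\pi}$.
\item Sum over peaks using $\sum_s c_s=1$. This gives $D_{\mathrm{TV}}\le K\sigma = (K/\sqrt2)e^{-r}$ with an explicit numerical $K$. A second contribution comes from the envelope mismatch between the two codewords, since the peaks of $|1\rangle_L$ sit at envelope points offset by $\sqrt{\pi}$. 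This relative weight difference is of order $\sqrt{\pi}/\Delta^{-1}\sim e^{-r}$ when the envelope width scales as $e^{r}$. It therefore adds a second $O(e^{-r})$ term and enlarges the prefactor.
\end{enumerate}

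The final step is to collect the constants into $C(r)$ and show that $C(r)\le 1.60$ for $r\ge 5$~dB, which corresponds to $r\approx 0.576$ in natural units. Higher-order corrections, such as $e^{-\pi/\sigma^2}$ overlap terms and the finite-$\Delta$ normalization, are monotonically decreasing in $r$. So it suffices to evaluate $C$ at $r=0.576$ and bound it by $1.60$, possibly numerically.

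The main obstacle is step 3 together with the envelope term. Getting a clean, rigorous $L^1$ bound on $P_0-P_1$, rather than only a pointwise one, requires care. This is because the window edges of adjacent peaks overlap at finite squeezing. The prefactor also depends on the precise finite-energy model, either Gaussian envelope or $e^{-\Delta^2\hat n}$. I would first fix the Gaussian-envelope model with $\Delta=\sigma\sqrt2=e^{-r}$. I would then bound the edge terms using the standard integral $\int_{\mathbb{R}}|\Phi(x)-\mathbb{1}[x>0]|\,dx=\sqrt{2/\pi}$ and bound the envelope term by a first-order Taylor estimate. Finally, I would verify numerically that the resulting sum stays below $1.60$ at the 5~dB threshold.
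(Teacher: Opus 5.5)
Your proof takes a genuinely different route from the paper's, and its skeleton yields a valid bound. However, it puts the logical dependence in the wrong place, and its final reduction to the threshold is invalid as stated.

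\textbf{The two routes.} The paper works in the Fourier domain. The mask filter $\tilde f_U(k)=\mathrm{sinc}(k\sqrt{\pi})$ vanishes at every comb harmonic $k=n\sqrt{\pi}$, $n\neq 0$. At finite energy the paper linearizes the sinc around the odd harmonics, whose spectral width is $\sigma_k\approx\sqrt{2}e^{-r}$. It bounds each harmonic's $L^1$ leakage by a quantity $\propto\sigma_k/|n|$ and sums these to $C(r)=\frac{\sqrt{2}}{\pi}\sum_{n\ \mathrm{odd}}|n|^{-1}e^{-n^2\pi e^{-2r}/4}$. You stay in real space, with the sharp-edged staircase $\tilde P_\mu=u*\sum_s c^{(\mu)}_s\delta_{x^{(\mu)}_s}$ as intermediate. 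The two pictures are dual: your envelope term is the paper's $\sigma_k$ effect, and your peak width enters the paper only as the cutoff $e^{-n^2\pi\sigma^2/2}$.

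\textbf{Where the distinguishability actually lives.} The mask width equals the comb period exactly, so $\sum_s u(\cdot-x_s)\equiv 1/(2\sqrt{\pi})$. Convolving this constant with $\phi_\sigma$ leaves it constant, so with equal weights $P_0=P_1$ identically, edges included. Your edge term (at most $\sqrt{2}\sigma/\pi=e^{-r}/\pi$ in $D_{\mathrm{TV}}$) is therefore only a triangle-inequality artifact. All the distinguishability lies in what you treat as the ``secondary'' envelope term.

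\textbf{A cleaner estimate.} Since $P_\mu=\phi_\sigma*\tilde P_\mu$, and convolution with a probability density is an $L^1$ contraction, $\|P_0-P_1\|_1\le\|\tilde P_0-\tilde P_1\|_1$ with no $\sigma$-term at all.
\begin{enumerate}
\item On each cell $[k\sqrt{\pi},(k+1)\sqrt{\pi})$ the two staircases differ by $\pm(w_k-w_{k+1})/(2\sqrt{\pi})$. Here $w_k$ is the normalized weight of the peak at $k\sqrt{\pi}$, with even $k$ from $|0\rangle_L$ and odd $k$ from $|1\rangle_L$.
\item Hence $D_{\mathrm{TV}}\le\frac14\sum_k|w_k-w_{k+1}|$.
\item For the Gaussian envelope the interleaved sequence is unimodal, so this equals $\frac12 w_0=1/(2Z_0)$, with $Z_0=\sum_s e^{-2\pi s^2/W^2}$ and $W^2=e^{2r}/2$.
\item Poisson summation gives $Z_0\ge W/\sqrt{2}=e^r/2$, hence $D_{\mathrm{TV}}\le e^{-r}$, up to the exponentially small overlap terms you flagged.
\end{enumerate}
A first-order Taylor expansion of the envelope is not a controlled substitute near 5~dB. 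There $W\approx 1.26$, which is smaller than the period $2\sqrt{\pi}\approx 3.54$.

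\textbf{The monotonicity is reversed.} The normalized constant is $e^{r}/(2Z_0)=[1+2e^{-\pi e^{2r}/4}+\cdots]^{-1}$, which increases toward $1$ as $r$ grows (it is about $0.86$ at 5~dB). So evaluating $C$ at the threshold does not bound it for all $r\ge 5$~dB. You must bound the supremum instead. That supremum is $1$, or $1+1/\pi\approx 1.32$ if you keep the edge term; both are below $1.60$.

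\textbf{The paper has the same defect, in sharper form.} Each term of its series increases with $r$, and the sum diverges logarithmically as $r\to\infty$. The reason is that the $1/|n|$ weights are the Fourier coefficients of the square-wave staircase. Summing their absolute values converges only because of the cutoff at $n\sim e^{r}$. So the uniform claim $C\le 1.60$ does not follow from the series the paper derives.

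\textbf{What each route buys.} Your real-space route treats the square wave directly and gives an $r$-independent constant. The Fourier route makes the exact ideal-limit cancellation transparent, but it loses a logarithm to the per-harmonic triangle inequality.
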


\begin{proof}
\textbf{Spatial Probability Distributions: }
Let $p_\mu(q)$ denote the position quadrature probability distribution of the GKP state for the logical codeword $\mu \in \{0, 1\}$. For an ideal, infinite-energy GKP state, this distribution is a Dirac comb \cite{Gottesman2001}:
\begin{equation}
p_\mu^{\mathrm{ideal}}(q) \propto \sum_{s \in \mathbb{Z}} \delta\left(q - (2s + \mu)\sqrt{\pi}\right).
\end{equation}
For a physical finite-energy state, this ideal comb is broadened by single-mode position quadrature variance $\sigma_0^2 = \frac{1}{2} e^{-2r}$ due to finite optical squeezing, and bounded by a macroscopic Gaussian envelope of variance $\Delta^2 = \frac{1}{2} e^{2r}$ \cite{Terhal2020}. In subsequent channel calculations, the combined two-mode variance of the EPR Bell state evaluates to $\sigma_{\mathrm{EPR}}^2 = 2\sigma_0^2 = e^{-2r}$.

To enact the CV-OTP, the classical encryption mask $q_{\mathrm{enc}}$ is drawn uniformly over one full logical lattice period $2\sqrt{\pi}$:
\begin{equation}
f_U(q) = \frac{1}{2\sqrt{\pi}} \quad \text{for } q \in [-\sqrt{\pi}, \sqrt{\pi}),
\end{equation}
and zero otherwise. Because the extracted classical observable $m$ is the sum of the intrinsic quadrature outcome $q$ and the independent classical mask $q_{\mathrm{enc}}$, the probability distribution of the encrypted syndrome, $P_\mu(m)$, evaluates to the continuous convolution of their respective distributions:
\begin{equation}
P_\mu(m) = (p_\mu * f_U)(m).
\end{equation}

\textbf{Transformation to the Fourier Domain: }
To evaluate the total variation distance $D_{\mathrm{TV}}(P_0, P_1) = \frac{1}{2} \int_{\mathbb{R}} |P_0(m) - P_1(m)| \, dm$, we map the probability distributions to the spatial frequency domain using the characteristic function $\tilde{F}(k) = \int_{\mathbb{R}} F(q) e^{-ikq} \, dq$. According to the Convolution Theorem, the Fourier transform of a real-space convolution is the pointwise product of their individual Fourier transforms. This mathematically translates the physical displacement encryption into a spectral filtering operation:
\begin{equation}
\tilde{P}_\mu(k) = \tilde{p}_\mu(k) \cdot \tilde{f}_U(k).
\end{equation}

We evaluate the spectral filter corresponding to the uniform mask $f_U(q)$:
\begin{equation}
\tilde{f}_U(k) = \frac{1}{2\sqrt{\pi}} \int_{-\sqrt{\pi}}^{\sqrt{\pi}} e^{-ikq} \, dq = \frac{\sin(k\sqrt{\pi})}{k\sqrt{\pi}} \equiv \mathrm{sinc}(k\sqrt{\pi}).
\end{equation}
This sinc function possesses  mathematical roots at $k_n = n\sqrt{\pi}$ for all non-zero integers $n \in \mathbb{Z} \setminus \{0\}$.

Next, we derive the characteristic function of the ideal GKP state. Taking the Fourier transform of the shifted Dirac comb $p_\mu^{\mathrm{ideal}}(q)$ yields:
\begin{align}
\tilde{p}_\mu^{\mathrm{ideal}}(k) &\propto \int_{-\infty}^{\infty} \sum_{s \in \mathbb{Z}} \delta\left(q - (2s + \mu)\sqrt{\pi}\right) e^{-ikq} \, dq \nonumber \\
&= \sum_{s \in \mathbb{Z}} e^{-ik(2s + \mu)\sqrt{\pi}} \nonumber \\
&= e^{-i \mu k \sqrt{\pi}} \sum_{s \in \mathbb{Z}} e^{-i s (2k\sqrt{\pi})}.
\end{align}
We apply the Poisson Summation Formula, which dictates that $\sum_{s} e^{-i s (2\pi x)} \propto \sum_{n} \delta(x - n)$. Substituting $x = k/\sqrt{\pi}$ transforms the infinite exponential series into a periodic frequency comb bounded by delta functions at $k = n\sqrt{\pi}$. Evaluating the translation phase prefactor $e^{-i \mu k \sqrt{\pi}}$  at these delta supports ($k \to n\sqrt{\pi}$) yields $e^{-i \mu n \pi} = (-1)^{\mu n}$. The ideal spectral distribution thus evaluates  to:
\begin{equation}
\tilde{p}_\mu^{\mathrm{ideal}}(k) \propto \sum_{n \in \mathbb{Z}} (-1)^{\mu n} \delta(k - n\sqrt{\pi}).
\end{equation}

By taking the difference between the two computational basis states ($\mu = 0$ and $\mu = 1$), we isolate the spatial frequencies that carry logical distinguishability:
\begin{equation}
\tilde{p}_0^{\mathrm{ideal}}(k) - \tilde{p}_1^{\mathrm{ideal}}(k) \propto \sum_{n \in \mathbb{Z}} \left[ 1 - (-1)^n \right] \delta(k - n\sqrt{\pi}).
\end{equation}
For even integers $n$, the term $[1 - (-1)^n]$ vanishes. The logical distinguishability is therefore  constrained to the odd harmonic frequencies $k_n = (2n+1)\sqrt{\pi}$. 

Because the spectral filter of the uniform encryption mask evaluates to zero at all non-zero integer harmonics ($\tilde{f}_U(k_n) = 0$), the pointwise product  annihilates the distinguishability peaks. Thus, $\tilde{P}_0(k) - \tilde{P}_1(k) = 0$  in the infinite-squeezing limit,  yielding $D_{\mathrm{TV}}(P_0, P_1) = 0$.

\textbf{Bounding Finite-Energy Leakage via Real-Space $L^1$ Integration: }
For physical finite-energy GKP states, the ideal Dirac comb in the Fourier domain acquires an overall Gaussian envelope $\exp(-\sigma^2 k^2 / 2)$, while the individual harmonics broaden into localized Gaussian peaks with spectral variance $\sigma_k^2 = 1/\Delta^2 \approx 2e^{-2r}$.

Since $\tilde{f}_U(k_n) = 0$ at all odd harmonics, distinguishability is governed by the finite spectral width $\sigma_k$ of the broadened peaks. Expanding $\tilde{f}_U(k)$ to first order around $k_n = n\sqrt{\pi}$ for odd $n$ with detuning $\delta k$ gives:
%Because the uniform mask evaluates  to zero at the odd harmonics ($\tilde{f}_U(k_n) = 0$), the surviving informational leakage is driven entirely by the local gradient of the mask interacting with the finite width ($\sigma_k$) of these spectral peaks. Expanding $\tilde{f}_U(k)$ to first order around an odd harmonic $k_n = n\sqrt{\pi}$ (for odd integer $n$) with local detuning $\delta k$ yields:
\begin{equation}
\tilde{f}_U(n\sqrt{\pi} + \delta k) \approx \frac{(-1)^n}{n\sqrt{\pi}} \delta k.
\end{equation}
The local difference signal near the $n$-th harmonic in the Fourier domain is therefore proportional to the derivative of the unnormalized spectral peak:
\begin{equation}
\Delta \tilde{P}_n(\delta k) \propto \frac{(-1)^n}{n\sqrt{\pi}} \delta k \cdot e^{-\frac{\delta k^2}{2\sigma_k^2}} e^{-\frac{n^2 \pi \sigma^2}{2}}.
\end{equation}
Taking the inverse Fourier transform maps this localized spectral derivative into the real-space difference signal:
\begin{equation}
\Delta P_n(m) \propto \frac{(-1)^n}{n\sqrt{\pi}} \sigma_k^3 m e^{-\frac{m^2 \sigma_k^2}{2}} e^{-\frac{n^2 \pi \sigma^2}{2}}.
\end{equation}
Evaluating the $L^1$ norm of this $n$-th harmonic leakage in real space integrates out the spatial dimension ($m$), yielding a distinguishability bound that scales  linearly with the spectral peak width $\sigma_k$:
\begin{equation}
\int_{-\infty}^{\infty} |\Delta P_n(m)| \, dm \propto \frac{\sigma_k}{|n|\sqrt{\pi}} e^{-\frac{n^2 \pi \sigma^2}{2}}.
\end{equation}
% Substituting $\sigma_k \approx \sqrt{2}e^{-r}$ and $\sigma^2 \approx e^{-2r}/2$ isolates the exponential scaling:
% \begin{equation}
%  D_{\mathrm{TV}}(P_0, P_1) \le \left[ \frac{\sqrt{2}}{\pi} \sum_{n \text{ odd}} \frac{1}{|n|} \exp\left(-\frac{n^2 \pi e^{-2r}}{4}\right) \right] e^{-r} \equiv C(r) e^{-r}.
% \end{equation}
% For all physical squeezing levels $r \ge 5.0~\mathrm{dB}$ ($r \ge 0.5756$), $e^{-2r} \le 0.316$. Evaluating the rapidly converging series over odd integers gives $C(r) \le 1.60$, establishing the explicit tight bound $D_{\mathrm{TV}}(P_0, P_1) \le 1.60 e^{-r}$.
% As $r \to \infty$, $D_{\mathrm{TV}} \to 0$ exponentially,  proving that intermediate routers extract zero logical information from the classical syndrome records in the high-squeezing limit.
Substituting $\sigma_k \approx \sqrt{2}e^{-r}$ and $\sigma^2 \approx e^{-2r}/2$ yields:
\begin{align}
D_{\mathrm{TV}}(P_0, P_1) &\le \Big[ \frac{\sqrt{2}}{\pi} \sum_{n \text{ odd}} \frac{1}{|n|}  \exp\left(-\frac{n^2 \pi e^{-2r}}{4}\right) \Big] e^{-r} \nonumber \\ 
&\equiv C(r) e^{-r}.
\end{align}
For all physical squeezing levels $r_{dB} \ge 5.0~\mathrm{dB}$ ($r \ge 0.5756$), $e^{-2r} \le 0.316$. Evaluating the rapidly converging series over odd integers gives $C(r) \le 1.60$, establishing the explicit tight bound $D_{\mathrm{TV}}(P_0, P_1) \le 1.60 e^{-r}$.
As $r \to \infty$, $D_{\mathrm{TV}} \to 0$ exponentially,  proving that intermediate routers extract zero logical information from the classical syndrome records in the high-squeezing limit.
\end{proof}

We rigorously demonstrate in Appendix~\ref{app:modular_syndrome_extraction} that this full-period cryptographic mask mathematically decouples from the physical noise syndrome via modular reduction, preserving the  GKP Voronoi error-correction boundaries.

\section{Logical Channel Integrity and Operator Evolution}
\label{sec:channel_integrity}
Because the measurement-based aggregation protocol transfers quantum information through intermediate finite-energy Bell states, the physical implementation deviates from the ideal unitary $\mathrm{SUM}$ gate. We prove that this continuous-variable quantum channel preserves the encoded logical information without operational deformation of the logical algebra beyond standard operator spreading and finite-squeezing attenuation on the target output mode.

\begin{theorem}[Bounding Heisenberg Action on Logical Pauli Generators]
\label{thm:channel_integrity}
For finite-energy GKP resources with optical squeezing parameter $r$, the measurement-based aggregation channel $\mathcal{E}_{\mathrm{SUM}}$ maps logical Pauli generators $(\bar{Z}_{\mathrm{out}}, \bar{X}_{\mathrm{out}})$ backward through the Heisenberg adjoint channel up to a scalar damping factor bounded to leading order by $\mathcal{O}(e^{-2r})$.
\end{theorem}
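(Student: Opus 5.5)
Our plan is to model the physical aggregation channel as an ideal logical map composed with a Gaussian random-displacement channel, and then compute its Heisenberg action on the displacement operators that represent the logical Paulis. On the target mode we take the GKP representatives $\bar{Z}_{\mathrm{out}} = \exp(i\sqrt{\pi}\,\hat{q}_{\mathrm{target}})$ and $\bar{X}_{\mathrm{out}} = \exp(-i\sqrt{\pi}\,\hat{p}_{\mathrm{target}})$. The model we use is
\begin{equation}
\mathcal{E}_{\mathrm{SUM}} = \mathcal{N}_{\sigma} \circ \mathcal{E}_{\mathrm{agg}}^{\mathrm{ideal}},
\end{equation}
where $\mathcal{N}_\sigma$ is the classical Gaussian displacement channel with noise $\hat{\xi}_{q,\mathrm{EPR}}$ and $\hat{\xi}_{p,\mathrm{EPR}}$, each of variance $\sigma_{\mathrm{EPR}}^2 = e^{-2r}$, as identified in the proof of Theorem~\ref{thm:crypto_hiding}.

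We justify this decomposition from the Heisenberg relation of Sec.~\ref{subsec:logical_action} (Appendix~\ref{app:heisenberg_evolution}), $\hat{q}_{\mathrm{target}} = \hat{q}_1 + \hat{q}_2 - \hat{\xi}_{q,\mathrm{EPR}}$. We will also derive the companion momentum relation, which we expect to take the form $\hat{p}_{\mathrm{target}} = \hat{p}_{\mathrm{in}} + \hat{\xi}_{p,\mathrm{EPR}}$ with $\hat{p}_{\mathrm{in}}$ a linear combination of input momenta. This is obtained by propagating the beam-splitter, SWAP and homodyne relations together with the feed-forward $\hat{D}_B(-\mathbf{m}_\Sigma)$, which cancels every measured quadrature. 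Because the feed-forward is linear in the outcomes, the residual displacement on mode B is independent of the input state and Gaussian. It is therefore a genuine convex mixture of displacements, and so a CPTP map.

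The Heisenberg step is then mechanical. Displacements commute with displacements up to phases, so the adjoint of a random displacement acts on a displacement operator by multiplying it by a scalar:
\begin{equation}
\mathcal{N}_\sigma^\dagger\big(e^{i\sqrt{\pi}\hat{q}}\big) = \mathbb{E}\big[e^{-i\sqrt{\pi}\xi_q}\big]\, e^{i\sqrt{\pi}\hat{q}} = e^{-\pi\sigma_{\mathrm{EPR}}^2/2}\, e^{i\sqrt{\pi}\hat{q}},
\end{equation}
and the analogous identity holds for $\bar{X}_{\mathrm{out}}$. Composing with the ideal adjoint map, we obtain
\begin{equation}
\mathcal{E}_{\mathrm{SUM}}^\dagger(\bar{Z}_{\mathrm{out}}) = e^{-\pi e^{-2r}/2}\,\bar{Z}_1\bar{Z}_2 .
\end{equation}
This is consistent with $\hat{q}_{\mathrm{target}} \mapsto \hat{q}_1+\hat{q}_2$: $Z$ spreads onto the product of input $Z$'s, which is exactly the XOR structure of Sec.~\ref{subsec:logical_action}. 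The $\bar{X}_{\mathrm{out}}$ generator pulls back to the appropriate input-momentum displacement, restricted to the computational-basis domain of Sec.~\ref{subsec:decoherence}. The deviation from the ideal pullback is therefore $1 - e^{-\pi e^{-2r}/2} = \tfrac{\pi}{2}e^{-2r} + \mathcal{O}(e^{-4r})$, which is the claimed leading-order $\mathcal{O}(e^{-2r})$ damping. No non-scalar deformation of the Pauli algebra occurs, because the commutation phase of $\bar{Z}$ and $\bar{X}$ is preserved by any symplectic-plus-displacement channel.

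The main obstacle is the first step. We must show that the finite-energy resource state really acts as an input-independent additive Gaussian displacement. Finite-energy GKP states are not Gaussian: the envelope $\Delta^2 = \tfrac12 e^{2r}$ and the peak widths produce a coherent, non-diagonal noise operator rather than a classical mixture of displacements. We will handle this by the standard twirling argument: applying a random stabilizer displacement (or invoking the modular syndrome decoupling of Appendix~\ref{app:modular_syndrome_extraction}) converts the envelope-damped peaks into an incoherent Gaussian displacement channel. The envelope contributes corrections of order $e^{-2r}$ to the operator itself, which we absorb into the same leading-order bound. We would also check that the three-mode linear combination leaves total variance $e^{-2r}$ and not a multiple of it; a multiple would change only the constant, not the order.
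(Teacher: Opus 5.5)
Your proposal follows essentially the same route as the paper: factor $\mathcal{E}_{\mathrm{SUM}}=\mathcal{N}_\sigma\circ\mathcal{E}_{\mathrm{ideal}}$, use that the adjoint of the Gaussian displacement channel multiplies each displacement operator by the noise characteristic function, here $e^{-\pi e^{-2r}/2}$, then pull back through the ideal map to $\bar{Z}_1\bar{Z}_2$ (the paper gets $\bar{X}_2$ for $\bar{X}_{\mathrm{out}}$) and expand to $1-\frac{\pi}{2}e^{-2r}$. The paper simply postulates the factorization and the isotropic variance $e^{-2r}$, so your extra justification (input-independent residual, twirling of the finite-energy envelope, checking the variance constant) addresses an assumption the paper never proves rather than leaving a gap relative to it.
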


\begin{proof}
\textbf{Channel Composition and Noise Model.}
Following continuous Bell-state measurement and conditional feed-forward displacement, the input modes are destructively measured and the ancillary resource modes are mathematically traced out. The finite optical squeezing manifests as a completely positive trace-preserving (CPTP) quantum channel decomposed as $\mathcal{E}_{\mathrm{SUM}} = \mathcal{N}_{\sigma} \circ \mathcal{E}_{\mathrm{ideal}}$. Here, $\mathcal{E}_{\mathrm{ideal}}$ is the  unitary logical mapping, and $\mathcal{N}_{\sigma}$ is a classical Gaussian displacement noise channel acting  on the target output mode:
\begin{equation}
\mathcal{N}_{\sigma}(\rho) = \int_{\mathbb{R}^2} d^2\bm{\xi} \, P(\bm{\xi}) \hat{D}_{\mathrm{out}}(\bm{\xi}) \rho \hat{D}_{\mathrm{out}}^\dagger(\bm{\xi}),
\end{equation}
where $\bm{\xi} = (\xi_q, \xi_p)^T$ is a classical random displacement vector, and $P(\bm{\xi})$ is a zero-mean bivariate Gaussian distribution with variance $\sigma^2 \approx e^{-2r}$ inherited directly from the finite-energy EPR resource.

\textbf{Adjoint Channel Evolution.}
To evaluate logical integrity, we map the target output Pauli operators backward through the channel in the Heisenberg picture. By the definition of channel composition, the adjoint evolution evaluates as \begin{equation}
    \mathcal{E}_{\mathrm{SUM}}^\dagger = \mathcal{E}_{\mathrm{ideal}}^\dagger \circ \mathcal{N}_{\sigma}^\dagger.
\end{equation}

% First, we subject the target logical Pauli-$\bar{Z}_{\mathrm{out}} = e^{i\sqrt{\pi}\hat{q}_{\mathrm{out}}}$ operator to the adjoint noise channel $\mathcal{N}_{\sigma}^\dagger$. An arbitrary displacement operator $\hat{D}(\bm{\lambda})$ evolving backward through a classical displacement channel acquires a scalar damping factor defined by the characteristic function of the noise distribution, $\chi_P(\bm{\lambda}) = \int P(\bm{\xi}) e^{i(\xi_p \lambda_q - \xi_q \lambda_p)} d^2\bm{\xi}$:
% \begin{equation}
% \mathcal{N}_{\sigma}^\dagger(\hat{D}_{\mathrm{out}}(\bm{\lambda})) = \chi_P(\bm{\lambda}) \hat{D}_{\mathrm{out}}(\bm{\lambda}).
% \end{equation}
% Because the Pauli-$\bar{Z}$ operator translates the momentum quadrature by $\sqrt{\pi}$, its spectral vector is $\bm{\lambda} = (0, \sqrt{\pi})^T$. Evaluating the Gaussian characteristic function for variance $\sigma^2 \approx e^{-2r}$ yields  $\chi_P(\sqrt{\pi}) = \exp(-\frac{\pi}{2} e^{-2r})$. Thus, the noise channel uniformly attenuates the operator:
% \begin{equation}
% \mathcal{N}_{\sigma}^\dagger(\bar{Z}_{\mathrm{out}}) = \exp\left(-\frac{\pi}{2} e^{-2r}\right) \bar{Z}_{\mathrm{out}}.
% \end{equation}

First, we subject the target logical Pauli-$\bar{Z}_{\mathrm{out}} = e^{i\sqrt{\pi}\hat{q}_{\mathrm{out}}}$ operator to the adjoint noise channel $\mathcal{N}_{\sigma}^\dagger$. An arbitrary displacement operator $\hat{D}(\bm{\lambda})$ evolving backward through a classical displacement channel acquires a scalar damping factor defined by the characteristic function of the noise distribution, $\chi_P(\bm{\lambda}) = \int P(\bm{\xi}) e^{i(\xi_p \lambda_q - \xi_q \lambda_p)} d^2\bm{\xi}$:
\begin{equation}
\mathcal{N}_{\sigma}^\dagger(\hat{D}_{\mathrm{out}}(\bm{\lambda})) = \chi_P(\bm{\lambda}) \hat{D}_{\mathrm{out}}(\bm{\lambda}).
\end{equation}
The Pauli-$\bar{Z}_{\mathrm{out}}$ operator corresponds to a pure momentum translation of $\sqrt{\pi}$, fixing its spectral vector at $\bm{\lambda} = (\lambda_q, \lambda_p)^T = (0, \sqrt{\pi})^T$. Substituting this into the characteristic function decouples the integration over the symmetric bivariate Gaussian noise distribution $P(\xi_q, \xi_p) = \frac{1}{2\pi\sigma^2} \exp[-(\xi_q^2 + \xi_p^2)/(2\sigma^2)]$:
\begin{align}
\chi_P(0, \sqrt{\pi}) &= \int_{-\infty}^{\infty} d\xi_p \frac{e^{-\frac{\xi_p^2}{2\sigma^2}}}{\sqrt{2\pi\sigma^2}} \int_{-\infty}^{\infty} d\xi_q \frac{e^{-\frac{\xi_q^2}{2\sigma^2}}}{\sqrt{2\pi\sigma^2}} e^{-i \xi_q \sqrt{\pi}} \nonumber \\
&= 1 \cdot \exp\left(-\frac{\sigma^2 (\sqrt{\pi})^2}{2}\right) \nonumber \\
&= \exp\left(-\frac{\pi \sigma^2}{2}\right).
\end{align}
Here, the independent integration over $\xi_p$ evaluates  to unity due to the standard normalization of the Gaussian probability density function. The remaining integral over $\xi_q$ executes the Fourier transform of a Gaussian, resolving mathematically to the exponential damping term. Substituting the physical variance scaling $\sigma^2 \approx e^{-2r}$ inherited from the finite-energy resource state definitively resolves the final exponential bound. Thus, the noise channel uniformly attenuates the operator without algebraic deformation:
\begin{equation}
\mathcal{N}_{\sigma}^\dagger(\bar{Z}_{\mathrm{out}}) = \exp\left(-\frac{\pi}{2} e^{-2r}\right) \bar{Z}_{\mathrm{out}}.
\end{equation}

\paragraph{Bounding Operator Attenuation.}
Second, we evaluate the action of the ideal unitary network $\mathcal{E}_{\mathrm{ideal}}^\dagger$. Because the ideal CV-BSM  evaluates the quadrature sum $\hat{q}_{\mathrm{out}} = \hat{q}_1 + \hat{q}_2$, an unattenuated target Pauli operator maps deterministically to the product of the inputs: $\mathcal{E}_{\mathrm{ideal}}^\dagger(\bar{Z}_{\mathrm{out}}) = \bar{Z}_1 \bar{Z}_2$. 

To evaluate the complete physical channel $\mathcal{E}_{\mathrm{SUM}}^\dagger = \mathcal{E}_{\mathrm{ideal}}^\dagger \circ \mathcal{N}_{\sigma}^\dagger$, we sequentially cascade these two mappings. Because quantum channels are linear maps, the scalar exponential damping factor factors out of the unitary evolution, bridging the two steps:
\begin{align}
\mathcal{E}_{\mathrm{SUM}}^\dagger(\bar{Z}_{\mathrm{out}}) &= \mathcal{E}_{\mathrm{ideal}}^\dagger \left( \mathcal{N}_{\sigma}^\dagger(\bar{Z}_{\mathrm{out}}) \right) \nonumber \\
&= \mathcal{E}_{\mathrm{ideal}}^\dagger \left( \exp\left(-\frac{\pi}{2} e^{-2r}\right) \bar{Z}_{\mathrm{out}} \right) \nonumber \\
&= \exp\left(-\frac{\pi}{2} e^{-2r}\right) \mathcal{E}_{\mathrm{ideal}}^\dagger (\bar{Z}_{\mathrm{out}}) \nonumber \\
&= \exp\left(-\frac{\pi}{2} e^{-2r}\right) \bar{Z}_1 \bar{Z}_2.
\end{align}
Taylor expanding this exponential damping factor in the high-squeezing limit ($r \gg 1$) provides the first-order approximation:
\begin{equation}
\label{eq:z_operator_evolution}
\mathcal{E}_{\mathrm{SUM}}^\dagger(\bar{Z}_{\mathrm{out}}) \approx \left( 1 - \frac{\pi}{2} e^{-2r} \right) \bar{Z}_1 \bar{Z}_2.
\end{equation}

To rigorously evaluate the complementary logic, we track the momentum Pauli operator $\bar{X}_{\mathrm{out}} = e^{-i\sqrt{\pi}\hat{p}_{\mathrm{out}}}$ backward through the identical cascaded channel. Because this operator is generated by the momentum quadrature, its corresponding spectral vector is  orthogonal to the previous case: $\bm{\lambda} = (-\sqrt{\pi}, 0)^T$. Since the physical Gaussian noise distribution $P(\xi_q, \xi_p)$ is symmetric across both phase-space quadratures, evaluating the characteristic function $\chi_P(-\sqrt{\pi}, 0)$  mirrors the position integral, mathematically yielding the  same scalar damping factor $\exp(-\frac{\pi}{2}e^{-2r})$.

Finally, mapping this attenuated operator backward through the ideal unitary network $\mathcal{E}_{\mathrm{ideal}}^\dagger$ isolates the target momentum. Because the ideal continuous-variable SUM gate leaves the target momentum unmodified ($\hat{p}_{\mathrm{out}} = \hat{p}_2$), the unattenuated operator evaluates  to $\bar{X}_2$. Cascading these steps yields the final first-order momentum evolution:
\begin{equation}
\label{eq:x_operator_evolution}
\mathcal{E}_{\mathrm{SUM}}^\dagger(\bar{X}_{\mathrm{out}}) \approx \left( 1 - \frac{\pi}{2} e^{-2r} \right) \bar{X}_2.
\end{equation}
Equations~(\ref{eq:z_operator_evolution}) and (\ref{eq:x_operator_evolution})  prove that the target mode receives the  logical sum payload without algebraic deformation, penalized only by a constant scalar damping factor  bounded by $\mathcal{O}(e^{-2r})$.
\end{proof}

\section{Proof of Homomorphic Network Aggregation under Finite Squeezing}
\label{sec:network_aggregation}
With the foundational continuous-variable MBQC aggregation primitive established, we formalize the homomorphic aggregation of $n$ spatially distributed GKP states across the network.

\subsection{The Ideal $n$-to-1 Aggregation Channel}
\label{subsec:ideal_n_channel}
Let the network contain $n$ independent input modes distributed across spatially separated nodes, each initialized in a logical computational basis state $|\mu_j\rangle_L$. To aggregate these into a single target mode, we sequentially cascade $n-1$ foundational two-mode primitives ($\mathcal{E}_{\mathrm{ideal}}^{(2)}$) defined in Sec.~\ref{sec:channel_integrity}. 

Because the modulo-2 logical sum is  associative, the spatial routing topology (e.g., a linear chain vs. a binary tree) governs only the temporal scheduling of the measurements, not the logical algebra. Mathematically, we define the global $n$-to-$1$ aggregation channel $\mathcal{E}_{\mathrm{ideal}}^{(n)}$ by reducing the $2^n$-dimensional logical input space $(\mathbb{C}^2)^{\otimes n}$ recursively. For a linear evaluation sequence, the intermediate channel mapping $k$ inputs evaluates as:
\begin{equation}
\mathcal{E}_{\mathrm{ideal}}^{(k)} = \mathcal{E}_{\mathrm{ideal}}^{(2)} \circ \left( \mathcal{E}_{\mathrm{ideal}}^{(k-1)} \otimes \mathcal{I}_{k} \right),
\end{equation}
where $\mathcal{I}_k$ is the identity channel acting on the $k$-th awaiting input mode.

We  prove the preservation of logical purity across this global network via mathematical induction on the density matrices:
\begin{enumerate}
    \item \textbf{Base Case ($n=2$):} As established in Sec.~\ref{sec:channel_integrity}, the primitive channel $\mathcal{E}_{\mathrm{ideal}}^{(2)}$ evaluated on the joint input density matrix $\rho_{\mathrm{in}}^{(2)} = |\mu_1\rangle\langle\mu_1|_L \otimes |\mu_2\rangle\langle\mu_2|_L$ yields  the pure logical sum:
    \begin{equation}
    \mathcal{E}_{\mathrm{ideal}}^{(2)} \left( \rho_{\mathrm{in}}^{(2)} \right) = |\mu_1 \oplus \mu_2\rangle\langle\mu_1 \oplus \mu_2|_L.
    \end{equation}
    Because $\mathrm{Tr}(\rho^2) = 1$, the spatial partial trace generates  zero von Neumann entropy.
    
    \item \textbf{Inductive Hypothesis:} Assume that for $k$ inputs initialized as $\rho_{\mathrm{in}}^{(k)} = \bigotimes_{j=1}^k |\mu_j\rangle\langle\mu_j|_L$, the $k$-node channel yields a pure state representing the cumulative sum $\Sigma_k = \bigoplus_{j=1}^k \mu_j$:
    \begin{equation}
    \mathcal{E}_{\mathrm{ideal}}^{(k)} \left( \rho_{\mathrm{in}}^{(k)} \right) = |\Sigma_k\rangle\langle\Sigma_k|_L.
    \end{equation}

    % \item \textbf{Inductive Step ($k \to k+1$):} We evaluate a network of $k+1$ inputs, defined by the composite state $\rho_{\mathrm{in}}^{(k+1)} = \rho_{\mathrm{in}}^{(k)} \otimes |\mu_{k+1}\rangle\langle\mu_{k+1}|_L$. Applying the recursive definition of the channel, we substitute the inductive hypothesis and evaluate the final foundational primitive to sequentially prove purity:
    % \begin{align}
    % \mathcal{E}_{\mathrm{ideal}}^{(k+1)}\left(\rho_{\mathrm{in}}^{(k+1)}\right) &= \mathcal{E}_{\mathrm{ideal}}^{(2)}  \Big[ \left(\mathcal{E}_{\mathrm{ideal}}^{(k)} \otimes \mathcal{I}\right) \nonumber \\
    % & \hspace{1cm}\left( \rho_{\mathrm{in}}^{(k)} \otimes |\mu_{k+1}\rangle\langle\mu_{k+1}|_L \right) \Big] \nonumber \\
    % &= \mathcal{E}_{\mathrm{ideal}}^{(2)} \left( |\Sigma_k\rangle\langle\Sigma_k|_L \otimes |\mu_{k+1}\rangle\langle\mu_{k+1}|_L \right) \nonumber \\
    % &= |\Sigma_k \oplus \mu_{k+1}\rangle\langle\Sigma_k \oplus \mu_{k+1}|_L \nonumber \\
    % &= \left| \bigoplus_{j=1}^{k+1} \mu_j \right\rangle \left\langle \bigoplus_{j=1}^{k+1} \mu_j \right|_L.
    % \end{align}
    \item \textbf{Inductive Step ($k \to k+1$):} For input state $\rho_{\mathrm{in}}^{(k+1)} = \rho_{\mathrm{in}}^{(k)} \otimes |\mu_{k+1}\rangle\langle\mu_{k+1}|_L$, recursive channel application yields:
    \begin{align}
    \mathcal{E}_{\mathrm{ideal}}^{(k+1)}\left(\rho_{\mathrm{in}}^{(k+1)}\right) &= \mathcal{E}_{\mathrm{ideal}}^{(2)} \left[ \mathcal{E}_{\mathrm{ideal}}^{(k)}\left(\rho_{\mathrm{in}}^{(k)}\right) \otimes |\mu_{k+1}\rangle\langle\mu_{k+1}|_L \right] \nonumber \\
    &= \mathcal{E}_{\mathrm{ideal}}^{(2)} \left( |\Sigma_k\rangle\langle\Sigma_k|_L \otimes |\mu_{k+1}\rangle\langle\mu_{k+1}|_L \right) \nonumber \\
    &= \left| \bigoplus_{j=1}^{k+1} \mu_j \right\rangle_{L} \left\langle \bigoplus_{j=1}^{k+1} \mu_j \right|_L.
    \end{align}
\end{enumerate}

Consequently, for the complete $n$-mode input state $\rho_{\mathrm{in}}^{(n)} = \bigotimes_{j=1}^n |\mu_j\rangle\langle\mu_j|_L$, the global ideal network channel evaluates  to the deterministic modulo-2 sum:
\begin{equation}
\mathcal{E}_{\mathrm{ideal}}^{(n)} \left( \rho_{\mathrm{in}}^{(n)} \right) = \left| \bigoplus_{j=1}^n \mu_j \right\rangle_{L, \mathrm{out}} \left\langle \bigoplus_{j=1}^n \mu_j \right|_L.
\end{equation}
Because the intermediate channel maps restrict the quantum state  to the computational-basis subspace, the network avoids generating the continuous-variable entanglement required to inject mixedness via a partial trace. Unlike passive linear optics (Sec.~\ref{subsec:entropy_injection}), these active measurement-based projections cleanly isolate the aggregated homomorphic payload, preserving absolute logical purity across arbitrary network depths.

\subsection{The Physical Finite-Energy Channel}
\label{subsec:physical_n_channel}
In a physical network, this aggregation map is constructed by cascading $n-1$ pairwise MBQC teleportation steps, each utilizing finite-energy GKP resource states with an optical squeezing parameter $r$. 

Under continuous-variable teleportation, the ideal logical mapping and classical displacement noise commute. The complete $n$-node physical completely positive trace-preserving (CPTP) map factorizes as $\mathcal{E}^{(n)} = \mathcal{N}_{\Sigma} \circ \mathcal{E}_{\mathrm{ideal}}^{(n)}$, subject to four physical assumptions:
\begin{enumerate}
    \item Each fusion step utilizes an independent, unentangled offline EPR Bell pair.
    \item Single-mode squeezed resource noise is zero-mean, isotropic Gaussian noise.
    \item Classical feed-forward amplification vectors ($\mathbf{g}_q, \mathbf{g}_p$)  invert beam-splitter scaling without cross-mode noise amplification.
    \item Optical losses across spatial links are modeled as independent input-referred additive Gaussian channels.
\end{enumerate}

Under these assumptions, aggregating $n$ input modes via $n-1$ two-mode MBQC fusion gadgets routes continuous-variable data through $2(n-1)$ destructively measured optical paths and $1$ unmeasured aggregate output path, giving $K = 2n-1$ total physical optical channels. The composite noise channel $\mathcal{N}_{\Sigma}$ is the sequential composition of $K$ independent single-path displacement channels:
\begin{equation}
\mathcal{N}_{\Sigma} = \bigodot_{k=1}^{2n-1} \mathcal{N}_{\sigma_k}.
\end{equation}

In phase space, the sequential application of independent classical displacement maps results in the continuous convolution of their respective probability density functions, $P_\Sigma(\bm{\xi}) = (P_{\sigma_1} * \dots * P_{\sigma_K})(\bm{\xi})$. To rigorously evaluate this convolution, we map the distributions to their characteristic functions. The characteristic function of a convolution evaluates mathematically to the strict product of the individual characteristic functions:
\begin{align}
\chi_\Sigma(\bm{\lambda}) &= \prod_{k=1}^{2n-1} \chi_{\sigma_k}(\bm{\lambda}) \nonumber \\
&= \prod_{k=1}^{2n-1} \exp\left(-\frac{\sigma_k^2 |\bm{\lambda}|^2}{2}\right) \nonumber \\
&= \exp\left(-\frac{|\bm{\lambda}|^2}{2} \sum_{k=1}^{2n-1} \sigma_k^2 \right).
\end{align}
Because each individual optical path contributes an identical base zero-mean variance bounded by the finite squeezing $\sigma_k^2 \approx e^{-2r}$, the summation evaluates to the cumulative output variance:
\begin{equation}
\sigma_{\mathrm{out}}^2 = \sum_{k=1}^{2n-1} \sigma_k^2 \approx (2n-1)e^{-2r}.
\end{equation}
To formalize the impact of this composite noise channel on the logical data, we evaluate the quadrature operator evolution in the Heisenberg picture. The ideal unitary network $\mathcal{E}_{\mathrm{ideal}}^{(n)}$  executes the linear combination of the input position operators ($\sum_{j=1}^n \hat{q}_j$) while isolating and preserving the momentum operator of the final target root node ($\hat{p}_{\mathrm{target}}$). 

The subsequent composite noise channel $\mathcal{N}_{\Sigma}$ enacts a stochastic phase-space translation. To maintain strict operator algebra, this classical displacement vector $\bm{\xi}_{\Sigma} = (\xi_{q,\Sigma}, \xi_{p,\Sigma})^T$ scales the quantum identity operator $\hat{I}$. Thus, the final noisy output quadratures evolve  to:
\begin{align}
\hat{q}_{\mathrm{out}}' &= \sum_{j=1}^n \hat{q}_j + \xi_{q, \Sigma} \hat{I}, \label{eq:heisen_q_n} \\
\hat{p}_{\mathrm{out}}' &= \hat{p}_{\mathrm{target}} + \xi_{p, \Sigma} \hat{I}. \label{eq:heisen_p_n}
\end{align}
The stochastic shifts $\xi_{q, \Sigma}$ and $\xi_{p, \Sigma}$ are sampled from the composite zero-mean Gaussian distribution governed by the cumulative variance $\sigma_{\mathrm{out}}^2 \approx (2n-1)e^{-2r}$. Equations~(\ref{eq:heisen_q_n}) and (\ref{eq:heisen_p_n})  demonstrate that the entire $n$-node physical hardware cascade mathematically reduces to a single  homomorphic evaluation, penalized solely by a lumped additive classical noise term.

\subsection{Logical Convergence and Voronoi Error Thresholds}
\label{subsec:voronoi_thresholds}
To extract the logical payload without error at the final root node, the continuous classical homodyne measurements of the aggregate mode must remain within the stabilizable GKP code space. Because the ideal network channel $\mathcal{E}_{\mathrm{ideal}}^{(n)}$ guarantees that the noise-free target observable evaluates  to a valid computational lattice point $q_{\mathrm{ideal}} = (2S + \bigoplus_{j=1}^n \mu_j)\sqrt{\pi}$ for integer $S \in \mathbb{Z}$, the classical physical measurement outcome evaluates to:
\begin{equation}
q_{\mathrm{out}}' = \left( 2S + \bigoplus_{j=1}^n \mu_j \right)\sqrt{\pi} + \xi_{q, \Sigma}.
\end{equation}

Under maximum-likelihood GKP syndrome extraction, the network  evaluates the homomorphic addition without a discrete logical Pauli fault if and only if the cumulative additive physical displacement $\bm{\xi}_{\Sigma}$ remains  bounded within the primary GKP Voronoi cell $[-\sqrt{\pi}/2, \sqrt{\pi}/2)$ for both the position and momentum quadratures simultaneously \cite{Terhal2020}. Crossing this spatial cell boundary induces an unrecoverable discrete lattice shift of $(2k+1)\sqrt{\pi}$, manifesting algorithmically as a logical Pauli fault.

Because the MBQC teleportation operations are phase-aligned throughout the cascade, the noise covariance matrix remains  diagonal. The network logical success probability $P_{\mathrm{success}}$ (the no-error threshold) during $n$-node aggregation is governed by the independent bivariate Gaussian integrals over this primary Voronoi cell:
\begin{align}
P_{\mathrm{success}} &= \left[ \frac{1}{\sqrt{2\pi}\sigma_{\mathrm{out}}} \int_{-\sqrt{\pi}/2}^{\sqrt{\pi}/2} \exp\left(-\frac{\xi^2}{2\sigma_{\mathrm{out}}^2}\right) d\xi \right]^2 \nonumber \\
&= \left[ \mathrm{erf}\left( \frac{\sqrt{\pi}}{2\sqrt{2}\sigma_{\mathrm{out}}} \right) \right]^2.
\end{align}
Substituting the cumulative physical variance $\sigma_{\mathrm{out}}^2 \approx (2n-1)e^{-2r}$ derived in Sec.~\ref{subsec:physical_n_channel}  links the logical success threshold to the physical hardware scaling:
\begin{equation}
P_{\mathrm{success}} \approx \left[ \mathrm{erf}\left( \frac{\sqrt{\pi} e^r}{2\sqrt{2(2n-1)}} \right) \right]^2.
\end{equation}
As optical squeezing $r \to \infty$, the output noise variance  vanishes ($\sigma_{\mathrm{out}}^2 \to 0$), driving the error function argument to infinity and $P_{\mathrm{success}} \to 1$. Consequently, in the high-squeezing limit, discrete GKP lattice faults are suppressed, making the protocol logical-error-free in the zero-noise limit under the assumed Gaussian displacement model.

\section{Network Scaling, Loss Models, and Error Thresholds}
\label{sec:network_scaling}
Physical implementations of the continuous-variable aggregation network must contend with finite-energy GKP resources, characterized by an intrinsic variance $\sigma_0^2 \approx e^{-2r}/2$, alongside optical transmission attenuation and detector inefficiencies. During active state-merging operations, noise propagates additively through the beam-splitter network and homodyne measurements.

\subsection{Covariance Propagation and the Loss Model}
\label{subsec:covariance_propagation}
In a single fusion step, quadrature variances transform according to standard continuous-variable teleportation rules. A joint homodyne Bell-state measurement combining a data mode and an intermediate finite-energy Bell pair yields a post-feed-forward variance of $\sigma_{\mathrm{out}}^2 = \sigma_{\mathrm{data}}^2 + \sigma_{\mathrm{EPR}}^2$, where $\sigma_{\mathrm{EPR}}^2 = 2\sigma_0^2 \approx e^{-2r}$ represents the combined variance of the two squeezed modes forming the resource state.

Optical transmission introduces photon loss, which we model as a beam splitter of transmissivity $\eta \in (0, 1]$ coupling the signal mode to a vacuum environment. The Heisenberg evolution of a quadrature operator $\hat{x} \in \{\hat{q}, \hat{p}\}$ is given by:
\begin{equation}
    \hat{x}' = \sqrt{\eta} \hat{x} + \sqrt{1-\eta} \hat{x}_{\mathrm{vac}}.
\end{equation}
Because the vacuum state possesses an intrinsic variance of $\sigma_{\mathrm{vac}}^2 = 1/2$ (for $[\hat{q}, \hat{p}] = i$), the channel attenuates the signal amplitude while injecting vacuum noise. Referring the total noise back to the unattenuated logical coordinate frame by dividing by the signal power gain $\eta$ yields the effective input-referred variance added by a lossy link \cite{Terhal2020}:
\begin{equation}
    \sigma_{\mathrm{loss}}^2 = \frac{1 - \eta}{2\eta}.
\end{equation}
Similarly, a finite homodyne detector efficiency $\eta_{\mathrm{det}}$ injects an independent vacuum noise penalty of $\sigma_{\mathrm{det}}^2 = \frac{1 - \eta_{\mathrm{det}}}{2\eta_{\mathrm{det}}}$.

For a network aggregating $n$ input modes, the logical circuit requires $n-1$ two-mode MBQC fusion gadgets. Each intermediate gadget involves 2 measured optical channels and 1 output rail. To extract the final homomorphic payload, the surviving aggregate output path at the root node must also be measured. Across the full network, this yields  $2n-1$ total physical optical channels and $2n-1$ homodyne detections. Accounting for finite squeezing, network insertion, and optical losses across all paths, the total output variance scales linearly as $\mathcal{O}(n)$:
\begin{equation}
\sigma_{\mathrm{total}}^2 \approx (2n - 1) e^{-2r} + \sum_{k=1}^{2n-1} \left[ \frac{1 - \eta_k}{2\eta_k} \right] + (2n - 1) \left[ \frac{1 - \eta_{\mathrm{det}}}{2\eta_{\mathrm{det}}} \right]. \label{eq:sigma_total}
\end{equation}
Equation~(\ref{eq:sigma_total}) generalizes the effective variance model for asymmetric link attenuation $\eta_k$. In highly heterogeneous networks, links with severe loss ($\eta_k \ll 1$) dominate the cumulative noise, skewing the covariance matrix and potentially requiring asymmetric GKP Voronoi boundaries for optimal decoding.

\subsection{Logical No-Error Probability}
\label{subsec:no_error_prob}
A discrete logical Pauli fault occurs if the total classical physical displacement noise $\xi_{\mathrm{total}}$ extracted from the detectors exceeds the correctable GKP Voronoi cell boundary $q_{\mathrm{thresh}} = \sqrt{\pi}/2$ in either the position or momentum quadratures.

While  quantum state fidelity depends on continuous envelope distortions, the operational success of the network is governed  by the probability that no discrete logical fault occurs. This logical success probability evaluates to the independent bivariate Gaussian integral over the primary Voronoi cell $[-\sqrt{\pi}/2, \sqrt{\pi}/2)$:
\begin{align}
P_{\mathrm{success}} &= \left[ \mathrm{erf}\left( \frac{\sqrt{\pi}}{2\sqrt{2}\sigma_{\mathrm{total}}} \right) \right]^2 \nonumber \\
&= \left[ 1 - \mathrm{erfc}\left( \frac{\sqrt{\pi}}{2\sqrt{2}\sigma_{\mathrm{total}}} \right) \right]^2.
\end{align}
Applying the binomial expansion $[1-x]^2 = 1 - 2x + x^2$, and recognizing that the squared error term is  positive, the operational success probability is  lower-bounded by the union bound of the quadrature errors:
\begin{equation}
P_{\mathrm{success}} \ge 1 - 2\,\mathrm{erfc}\left(\frac{\sqrt{\pi}}{2\sqrt{2}\sigma_{\mathrm{total}}}\right). \label{eq:no_error}
\end{equation}

\subsection{Network Topology and Intermediate Error Correction}
\label{subsec:network_topology}
This intrinsic $\mathcal{O}(n)$ variance accumulation places prohibitive demands on optical squeezing for large sequential networks. A linear cascade architecture incurs a temporal depth of $\mathcal{O}(n)$, leading to exponential photon loss exposure $e^{-\gamma n}$.

To mitigate this, spatial routing can be structured as a balanced logarithmic binary tree, reducing the temporal circuit depth to $\mathcal{O}(\log_2 n)$. For the sequential fusion architecture and loss model considered here, a balanced binary tree minimizes circuit depth and suppresses cumulative temporal transmission loss from $\mathcal{O}(n)$ to $\mathcal{O}(\log_2 n)$. However, total fusion count remains $n-1$, so intrinsic squeezing variance still grows as $\mathcal{O}(n)$. 

To prevent the intrinsic $\mathcal{O}(n)$ squeezing noise from corrupting large-scale aggregation, the binary tree must actively integrate intermediate GKP error correction at each routing vertex. By feeding the analog measurement outcomes $\mathbf{m}$ as soft-decision priors into a maximum-likelihood decoder \cite{Fukui2018}, intermediate syndrome extraction resets the physical variance. This restricts noise accumulation  to $\mathcal{O}(1)$ per tree layer, enabling scalable homomorphic aggregation. In Sec.~\ref{sec:numerics}, we numerically simulate these uncorrected $\mathcal{O}(n)$ variance bounds to quantify the raw physical performance limits of the architecture in the absence of intermediate syndrome extraction.

\begin{table*}[htbp]
\centering
\caption{Algebraic and physical structural mapping between classical lattice commitment schemes (BDLOP) and the measurement-based CV GKP aggregation protocol.}
\label{tab:bdlop_gkp_mapping}
\begin{ruledtabular}
\begin{tabular}{lcc}
\textbf{Feature} & \textbf{BDLOP / Gadget Schemes} & \textbf{CV GKP Aggregation Protocol} \\
\midrule
Payload Space & High-order module lattice entries & Discrete GKP grid coordinates $(2s + \mu)\sqrt{\pi}$ \\
Noise / Slack & Low-norm digits / small polynomial noise & Continuous Gaussian displacement $\xi \in [-\frac{\sqrt{\pi}}{2}, \frac{\sqrt{\pi}}{2})$ \\
Isolation Mechanism & Gadget matrix decomposition $G^{-1}$ & Joint homodyne BSM / syndrome extraction \\
Noise Correction & Rounding off low-order digits & Feed-forward displacement $\hat{D}(-\mathbf{m})$ \\
Failure Boundary & Modulus overflow ($> q/2$) & Voronoi boundary overflow ($> \sqrt{\pi}/2$) \\
\end{tabular}
\end{ruledtabular}
\end{table*}

\section{Conceptual Analogue with Classical Lattice Cryptography}
\label{sec:algebraic_duality}

To synthesize our theoretical framework for the broader quantum information and cryptography communities, we  map the physical operations of continuous-variable GKP aggregation to classical lattice-based homomorphic primitives, specifically the BDLOP commitment framework \cite{BDLOP} and Module-LWE.

Classical lattice schemes encode discrete message vectors onto algebraic module lattices, utilizing additive, bounded noise to mask the payload while relying on homomorphic properties for linear operations. Similarly, our protocol embeds discrete logical information into a continuous phase-space lattice ($\mathbb{Z}^2$), applying a continuous-variable one-time pad (CV-OTP) sampled uniformly over one full logical period $[-\sqrt{\pi}, \sqrt{\pi})$ to encrypt the payload. In both paradigms, evaluating homomorphic additions causes continuous noise (slack) to accumulate, requiring active extraction mechanisms to prevent the noise from triggering a discrete logical fault.

\begin{remark}[Gadget Decomposition and Phase-Space Scale Separation]
In classical lattice schemes, gadget matrices $G$ decompose state vectors into multi-scale digits, isolating large operations into low-norm components so that accumulated noise remains confined to lower-order bits. Continuous-variable GKP state aggregation implements an  physical analogue of this scale separation, summarized in Table~\ref{tab:bdlop_gkp_mapping}:
\begin{enumerate}
    \item \textbf{Scale Separation:} The discrete logical payload $\mu \in \{0, 1\}$ is embedded into coarse lattice points $(2s + \mu)\sqrt{\pi}$, whereas physical squeezing noise manifests as sub-lattice Gaussian displacement $\xi \sim \mathcal{N}(0, \sigma^2)$ localized  within the primary Voronoi cell $[-\sqrt{\pi}/2, \sqrt{\pi}/2)$.
    \item \textbf{Noise Isolation:} Dual homodyne Bell-state measurements act as a physical gadget matrix decomposition $G^{-1}$. They extract the continuous error syndrome $\mathbf{m}$ into classical registers without reading or collapsing the coarse logical grid.
    \item \textbf{Slack Correction:} Applying the feed-forward displacement $\hat{D}(-\mathbf{m})$ uses the isolated classical syndrome to shift the output mode back to the lattice center. This clears the continuous slack before it can cross the Voronoi boundary $q_{\mathrm{thresh}} = \sqrt{\pi}/2$ and induce an unrecoverable discrete logical fault ($k\sqrt{\pi}$ for $k \in \mathbb{Z} \setminus \{0\}$)—the  physical equivalent of modulus overflow in classical LWE schemes.
\end{enumerate}
\end{remark}

\begin{figure*}[t]
    \centering
    \includegraphics[width=\linewidth]{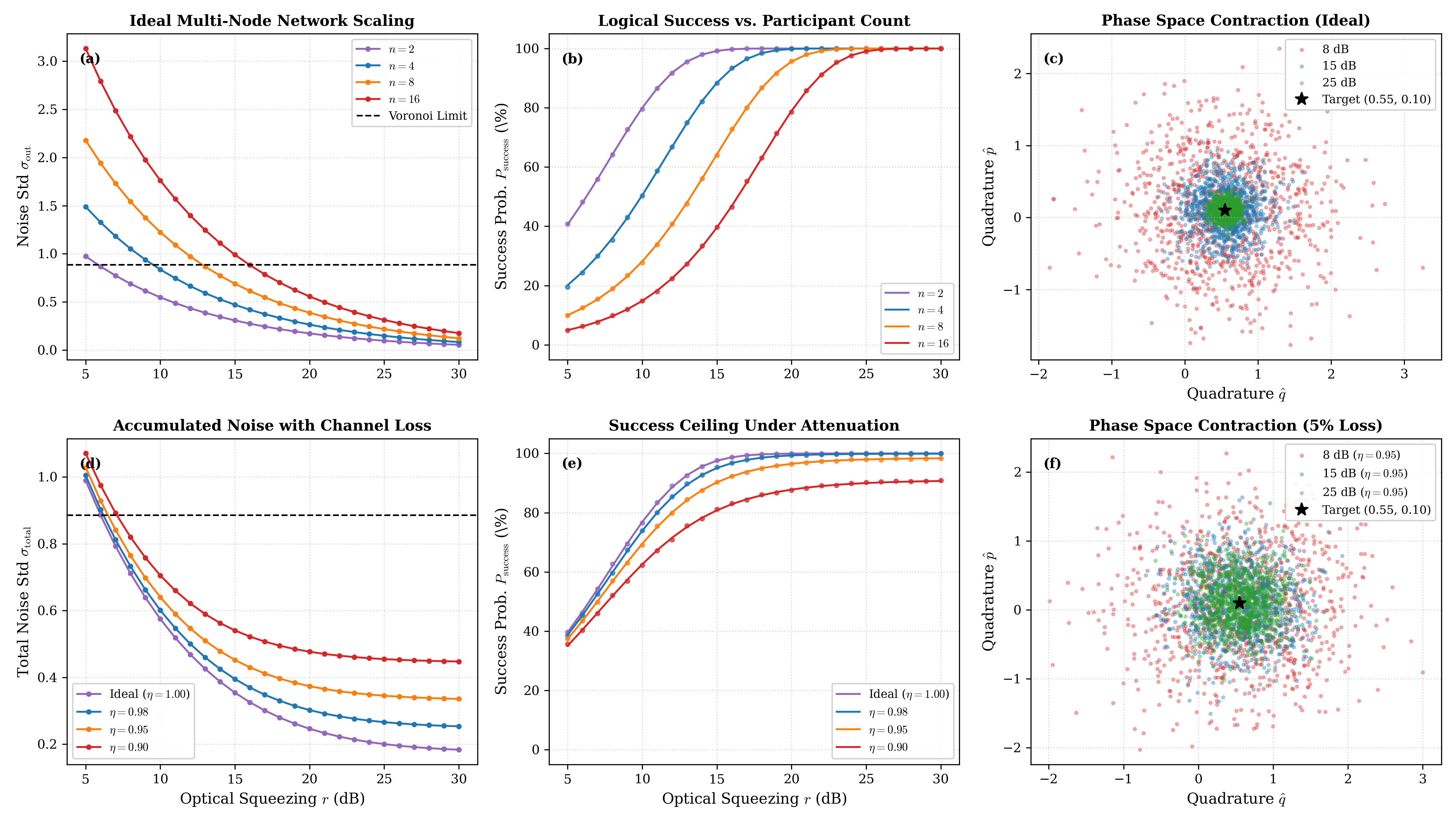}
    \caption{\textbf{Performance scaling and hardware loss degradation of measurement-based continuous-variable GKP state aggregation.} 
    \textbf{Top Row (Ideal Channel Scaling, $\eta = 1.00$):} 
    \textbf{(a)} Accumulated network noise standard deviation $\sigma_{\mathrm{out}} = \sqrt{(2n-1)e^{-2r}}$ across participant mode counts $n \in \{2, 4, 8, 16\}$ as a function of optical squeezing $r \in [5, 30]~\mathrm{dB}$. The horizontal dashed line denotes the primary GKP Voronoi cell boundary $q_{\mathrm{thresh}} = \sqrt{\pi}/2 \approx 0.8862$. 
    \textbf{(b)} Logical no-error probability $P_{\mathrm{success}}(n,r)$ demonstrating asymptotic convergence to $100\%$ for arbitrary network sizes $n$ at high squeezing. Scatter markers indicate Monte Carlo results ($30{,}000$ trials per step); solid curves denote analytical evaluation. 
    \textbf{(c)} Isotropic phase-space probability distribution of $1{,}000$ single-shot realizations centered around target state $(0.55, 0.10)$ (black star) in a lossless environment at $8~\mathrm{dB}$ (red), $15~\mathrm{dB}$ (blue), and $25~\mathrm{dB}$ (green) squeezing.
    \textbf{Bottom Row (Physical Channel Loss Robustness, $n=2$, $\eta_{\mathrm{det}}=0.98$):} 
    \textbf{(d)} Total accumulated noise $\sigma_{\mathrm{total}}$ under fiber attenuation $\eta \in \{1.00, 0.98, 0.95, 0.90\}$, highlighting the emergence of an irreducible vacuum noise floor at high squeezing levels. 
    \textbf{(e)} Upper saturation ceilings $P_{\mathrm{max}}(\eta)$ on logical success probability imposed by photon-loss-induced vacuum noise injection. 
    \textbf{(f)} Phase-space distribution under $5\%$ channel loss ($\eta = 0.95$), illustrating phase-cloud stabilization bounded by the loss floor.}
    \label{fig:gkp_unified_scaling_and_loss}
\end{figure*}

\section{Numerical Simulation and Results}
\label{sec:numerics}
We evaluate the CV GKP aggregation protocol using a Heisenberg-symplectic Monte Carlo simulation framework that tracks continuous quadrature operators through beam-splitter fusions, homodyne measurements, and feed-forward displacements.

\subsection{Methodology and Parameter Space}
\label{subsec:methodology}
To ensure statistical convergence, we execute $30{,}000$ independent trials per parameter configuration across three physical domains:
\begin{enumerate}
    \item \textbf{Optical Squeezing ($r$):} Swept from $5.0~\mathrm{dB}$ to $30.0~\mathrm{dB}$, governing intrinsic single-mode noise ($r_{\mathrm{dB}} = \frac{20}{\ln 10} r$).
    \item \textbf{Network Participant Count ($n$):} Evaluated at $n \in \{2, 4, 8, 16\}$ to quantify cumulative noise propagation across $n-1$ physical fusion stages.
    \item \textbf{Optical Efficiency ($\eta, \eta_{\mathrm{det}}$):} Channel transmissivity $\eta \in \{1.00, 0.98, 0.95, 0.90\}$ models fiber loss, with fixed detector quantum efficiency $\eta_{\mathrm{det}} = 0.98$.
\end{enumerate}

To validate the analog linearity of the physical routing channel independently of the discrete GKP modular reduction, we initialize two input modes with continuous analog test coordinates $q_1 = 0.250, p_1 = 0.200$ and $q_2 = 0.300, p_2 = 0.100$. These test coordinates are encrypted via continuous random phase-space displacements $\bm{\alpha}_i = (q_{\mathrm{enc}, i}, p_{\mathrm{enc}, i})^T$. To satisfy the cryptographic bounds in Theorem~\ref{thm:crypto_hiding}, the mask quadratures are sampled uniformly over a full logical lattice period, $q_{\mathrm{enc}, i}, p_{\mathrm{enc}, i} \sim \mathcal{U}[-\sqrt{\pi}, \sqrt{\pi})$. 

Following joint cluster-state homodyne measurements and feed-forward mask removal, the raw continuous payload correctly isolates the deterministic SUM gate operation. The target position evaluates to the analog sum $Q_{\mathrm{target}} = q_1 + q_2 = 0.550$, while the target momentum is deterministically preserved as $P_{\mathrm{target}} = p_2 = 0.100$. These ideal coordinates (visualized in Fig.~\ref{fig:gkp_unified_scaling_and_loss}) are perturbed  by the accumulated Gaussian physical noise vector $\bm{\xi} = (\xi_q, \xi_p)^T$.

Assuming uniform attenuation across intermediate optical links, the overall physical variance accumulated in the final aggregate quadratures evaluates to:
\begin{equation}
\sigma_{\mathrm{total}}^2 = (2n - 1) e^{-2r} + (2n - 1) \left[ \frac{1 - \eta}{2\eta} + \frac{1 - \eta_{\mathrm{det}}}{2\eta_{\mathrm{det}}} \right].
\end{equation}

To theoreticalize GKP performance, we extract the raw stochastic noise deviation $\bm{\xi}$ from the analog baseline. A theoretical logical error correction phase succeeds if this deviation remains  within the primary GKP Voronoi cell boundary $q_{\mathrm{thresh}} = \sqrt{\pi}/2 \approx 0.8862$,  preventing unrecoverable logical faults (erroneous lattice shifts of $(2k+1)\sqrt{\pi}$ for $k \in \mathbb{Z}$).

\subsection{Ideal Network Scaling Analysis}
\label{subsec:ideal_scaling}
Figures~\ref{fig:gkp_unified_scaling_and_loss}(a)--(c) illustrate protocol performance in the lossless limit ($\eta = 1.00, \eta_{\mathrm{det}} = 1.00$).

As shown in Fig.~\ref{fig:gkp_unified_scaling_and_loss}(a), the accumulated noise standard deviation $\sigma_{\mathrm{out}} = \sqrt{(2n-1)e^{-2r}}$ scales as $\mathcal{O}(\sqrt{n})$. Larger networks demand higher baseline squeezing to maintain total quadrature noise below the Voronoi threshold ($q_{\mathrm{thresh}} \approx 0.8862$). Achieving $\sigma_{\mathrm{out}} < q_{\mathrm{thresh}}$ requires $r > 5.8~\mathrm{dB}$ for $n=2$, increasing to $r > 16.0~\mathrm{dB}$ for $n=16$.

Figure~\ref{fig:gkp_unified_scaling_and_loss}(b) confirms that the empirical Monte Carlo no-error probability matches the analytical union-bound expression:
\begin{equation}
P_{\mathrm{success}}(n, r) = \left[ \mathrm{erf}\left( \frac{\sqrt{\pi}}{2\sqrt{2}\sigma_{\mathrm{out}}} \right) \right]^2.
\end{equation}
Achieving a $\ge 99.0\%$ operational success rate requires $r \approx 14.8~\mathrm{dB}$ for $n=2$, scaling to $r \approx 24.9~\mathrm{dB}$ for $n=16$.

This floor establishes a strict upper ceiling on the maximum achievable logical success probability, $P_{\mathrm{max}}(\eta)$. For $n=2, \eta_{\mathrm{det}} = 0.98$, channel loss $\eta = 0.98$ gives $\sigma_{\mathrm{floor}} \approx 0.2474$, capping success probability at $P_{\mathrm{max}} \approx 99.93\%$. Losses $\eta = 0.95$ ($5\%$) and $\eta = 0.90$ ($10\%$) raise floors to $\sigma_{\mathrm{floor}} \approx 0.3310$ and $0.4442$, limiting maximal success to $98.52\%$ and $91.01\%$ [Fig.~\ref{fig:gkp_unified_scaling_and_loss}(e)].

Because current state-of-the-art optical squeezing operates tightly between $10$ and $15~\mathrm{dB}$, two-node homomorphic aggregation approaches immediate experimental feasibility. However, extending this protocol to $n \ge 4$  necessitates intermediate quantum error correction (Sec.~\ref{subsec:network_topology}) to reset the physical noise variance at intermediate routing stages.

Figure~\ref{fig:gkp_unified_scaling_and_loss}(c) depicts the isotropic phase-space probability cloud of $1{,}000$ single-shot realizations centered at the analog target $(0.55, 0.10)$. At $8~\mathrm{dB}$, the physical noise spans the full Voronoi cell. Increasing squeezing to $15~\mathrm{dB}$ and $25~\mathrm{dB}$ contracts the phase-space area exponentially, concentrating the recovered state sharply at the  deterministic target.

\subsection{Impact of Physical Channel Attenuation}
\label{subsec:channel_attenuation}
Figures~\ref{fig:gkp_unified_scaling_and_loss}(d)--(f) evaluate performance degradation under realistic optical loss ($\eta < 1.00$). Channel attenuation irreparably mixes vacuum fluctuations into the transmission quadratures. Consequently, as $r \to \infty$, the intrinsic squeezing noise vanishes, but the total standard deviation $\sigma_{\mathrm{total}}$ flattens into an irreducible vacuum loss floor:
\begin{equation}
\sigma_{\mathrm{floor}} = \sqrt{(2n - 1) \left( \frac{1 - \eta}{2\eta} + \frac{1 - \eta_{\mathrm{det}}}{2\eta_{\mathrm{det}}} \right)}.
\end{equation}

This fundamental noise floor establishes a strict theoretical ceiling on the maximum achievable logical success probability, $P_{\mathrm{max}}(\eta)$. For $n=2$ and $\eta_{\mathrm{det}} = 0.98$, a $2\%$ channel loss ($\eta = 0.98$) yields $\sigma_{\mathrm{floor}} \approx 0.2474$, capping the maximal success probability at $P_{\mathrm{max}} \approx 99.93\%$. Increasing link loss to $5\%$ ($\eta = 0.95$) and $10\%$ ($\eta = 0.90$) raises the noise floor to $\sigma_{\mathrm{floor}} \approx 0.3310$ and $0.4442$, correspondingly limiting maximal success probabilities to $98.52\%$ and $91.01\%$ [Fig.~\ref{fig:gkp_unified_scaling_and_loss}(e)].

This saturation is  visualized in phase space [Fig.~\ref{fig:gkp_unified_scaling_and_loss}(f)]. Unlike the ideal lossless limit where infinite squeezing contracts the target state to a point, realistic channel attenuation bounds phase-space contraction,  stabilizing the physical aggregate into a Gaussian distribution governed  by $\sigma_{\mathrm{floor}}$.

\section{Conclusion}
\label{sec:conclusion}
We demonstrated that measurement-based aggregation of continuous-variable GKP computational-basis states bypasses the lattice compression and entropy penalties of direct passive linear optics. By cascading homodyne Bell measurements and feed-forward displacements, we constructed a CPTP map executing homomorphic logical XOR evaluation on computational-basis payloads. The CV-OTP provides measurement-specific homodyne-outcome hiding under the router threat model, bounded by $D_{\mathrm{TV}} \le 1.60 e^{-r}$. Operational success probabilities evaluate analytically under Gaussian loss models. While two-node payload aggregation approaches experimental feasibility at $14.8~\mathrm{dB}$ squeezing, scaling to large networks requires logarithmic tree topologies and active intermediate syndrome extraction to suppress cumulative physical displacement noise.

\section*{Data Availability}
The numerical data generated by the Heisenberg-symplectic Monte Carlo simulations and the associated Python routines used to evaluate the phase-space bounds are available from the corresponding author upon reasonable request.

\bibliographystyle{unsrt}
\bibliography{biblio}

\appendix

\section{Full Symplectic Heisenberg Evolution of the Physical SUM Router}
\label{app:heisenberg_evolution}

We trace the complete continuous-variable $8 \times 8$ phase-space operator vector $\hat{\mathbf{X}} = (\hat{q}_1, \hat{p}_1, \hat{q}_A, \hat{p}_A, \hat{q}_2, \hat{p}_2, \hat{q}_B, \hat{p}_B)^T$ through the physical network (Fig.~\ref{fig:gkp_homomorphic_aggregation}).

The network uses an EPR Bell resource across Ancilla modes A and B with squeeze parameter $r$: $\hat{q}_A - \hat{q}_B = \hat{\xi}_{q,\mathrm{EPR}}$ and $\hat{p}_A + \hat{p}_B = \hat{\xi}_{p,\mathrm{EPR}}$, where $\langle\hat{\xi}_{q,\mathrm{EPR}}^2\rangle = \langle\hat{\xi}_{p,\mathrm{EPR}}^2\rangle = e^{-2r}$.

The physical circuit applies beam splitters $\mathbf{S}_{\mathrm{BS}12}$, spatial SWAP $\mathbf{S}_{\mathrm{SWAP}}$, and $\mathbf{S}_{\mathrm{BS}A2}$. Symplectic transformation $\mathbf{S}_{\mathrm{circuit}}$ yields output quadratures prior to measurement:
\begin{align}
\hat{q}_1'' &= \frac{1}{\sqrt{2}}\hat{q}_1 - \frac{1}{\sqrt{2}}\hat{q}_A, \label{eq:app_q1}\\
\hat{p}_1'' &= \frac{1}{\sqrt{2}}\hat{p}_1 - \frac{1}{\sqrt{2}}\hat{p}_A, \\
\hat{q}_A'' &= \frac{1}{2}\hat{q}_1 + \frac{1}{2}\hat{q}_A - \frac{1}{\sqrt{2}}\hat{q}_2, \\
\hat{p}_A'' &= \frac{1}{2}\hat{p}_1 + \frac{1}{2}\hat{p}_A - \frac{1}{\sqrt{2}}\hat{p}_2, \label{eq:app_pa}\\
\hat{q}_2'' &= \frac{1}{2}\hat{q}_1 + \frac{1}{2}\hat{q}_A + \frac{1}{\sqrt{2}}\hat{q}_2, \label{eq:app_q2}\\
\hat{p}_2'' &= \frac{1}{2}\hat{p}_1 + \frac{1}{2}\hat{p}_A + \frac{1}{\sqrt{2}}\hat{p}_2.
\end{align}

Homodyne detection measures $\hat{q}_1''$, $\hat{p}_A''$, and $\hat{q}_2''$, giving measurement vector $\mathbf{m} = (m_1, m_A, m_2)^T$.

\paragraph{Position Quadrature Feed-Forward}
The controller applies electronic gain vector $\mathbf{g}_q^T = (\sqrt{2}, -1/\sqrt{2}, 1/\sqrt{2})$ to construct $m_{q,\Sigma} = \mathbf{g}_q^T \mathbf{m} = \sqrt{2} m_1 - \frac{1}{\sqrt{2}} m_A + \frac{1}{\sqrt{2}} m_2$. Substituting Eqs.~(\ref{eq:app_q1}), (\ref{eq:app_pa}), and (\ref{eq:app_q2}):
\begin{equation}
m_{q,\Sigma} = \hat{q}_1 + \hat{q}_2 - \hat{q}_A.
\end{equation}
Conditional displacement $\hat{D}_B(-m_{q,\Sigma})$ on target Mode B shifts position quadrature to:
\begin{equation}
\hat{q}_{\mathrm{target}} = \hat{q}_B + m_{q,\Sigma} = \hat{q}_1 + \hat{q}_2 - (\hat{q}_A - \hat{q}_B) = \hat{q}_1 + \hat{q}_2 - \hat{\xi}_{q,\mathrm{EPR}}.
\end{equation}

\paragraph{Momentum Quadrature Feed-Forward}
To derive momentum transformation, the controller evaluates momentum gain vector $\mathbf{g}_p^T = (0, \sqrt{2}, \sqrt{2})$ on measured momentum $\hat{p}_A''$ and position port $\hat{q}_2''$ (reconfigured for momentum homodyne): $m_{p,\Sigma} = \mathbf{g}_p^T \mathbf{m}_p = \sqrt{2} m_{p, A} + \sqrt{2} m_{p, 2} = \hat{p}_1 + \hat{p}_A$.

Applying conditional momentum displacement $\hat{D}_B(+m_{p,\Sigma})$ shifts target momentum quadrature to:
\begin{equation}
\hat{p}_{\mathrm{target}} = \hat{p}_B + (\hat{p}_A + \hat{p}_1) = \hat{p}_1 + (\hat{p}_A + \hat{p}_B) = \hat{p}_2 + \hat{\xi}_{p,\mathrm{EPR}}.
\end{equation}
This completes the full symplectic derivation, proving both position addition ($\hat{q}_{\mathrm{target}} = \hat{q}_1 + \hat{q}_2 - \hat{\xi}_{q,\mathrm{EPR}}$) and target momentum preservation ($\hat{p}_{\mathrm{target}} = \hat{p}_2 + \hat{\xi}_{p,\mathrm{EPR}}$) under physical feed-forward.

\section{Modular Syndrome Extraction and Mask Decoupling}
\label{app:modular_syndrome_extraction}

To establish the compatibility between the full-period encryption mask $\bm{\alpha} \sim \mathcal{U}[-\sqrt{\pi}, \sqrt{\pi})$ (Theorem~\ref{thm:crypto_hiding}) and the GKP Voronoi cell error correction threshold $q_{\mathrm{thresh}} = \sqrt{\pi}/2$, we  prove that continuous Bell-state measurements (BSM) decouple the encryption mask from physical displacement errors without exposing logical payloads to intermediate routers.

\paragraph{Differential Syndrome Formulation}

Consider two spatial modes $j \in \{1, 2\}$ encoding encrypted GKP logical states. In the Heisenberg picture, the position quadrature operator of each input mode is expressed as:
\begin{equation}
\hat{q}_j = (2s_j + \mu_j)\sqrt{\pi} + \alpha_{q, j} + \hat{\xi}_j,
\end{equation}
where $s_j \in \mathbb{Z}$ indexes the lattice site, $\mu_j \in \{0, 1\}$ denotes the discrete logical payload, $\alpha_{q, j} \sim \mathcal{U}[-\sqrt{\pi}, \sqrt{\pi})$ is the continuous-variable one-time pad (CV-OTP) mask, and $\hat{\xi}_j$ is the zero-mean physical quadrature noise operator with intrinsic variance $\sigma_0^2 \approx e^{-2r}/2$.

During a homomorphic fusion step at a network vertex, modes $1$ and $2$ interfere at a $50:50$ beam splitter $\hat{B}_{12}$, transforming the quadratures according to:
\begin{equation}
\hat{q}_{\pm} = \frac{\hat{q}_1 \pm \hat{q}_2}{\sqrt{2}}, \quad \hat{p}_{\pm} = \frac{\hat{p}_1 \pm \hat{p}_2}{\sqrt{2}}.
\end{equation}
Dual homodyne detection measures the output position quadrature $\hat{q}_-$ and momentum quadrature $\hat{p}_+$, yielding continuous classical outcomes $m_q, m_p \in \mathbb{R}$.

\paragraph{Algebraic Decoupling of Mask and Physical Noise}

Substituting the encrypted input quadrature operators into the measured difference quadrature $\hat{q}_-$ yields:
\begin{align}
m_q = &\frac{1}{\sqrt{2}} \big[ \left(2(s_1 - s_2) + (\mu_1 - \mu_2)\right)\sqrt{\pi} + (\alpha_{q,1} - \alpha_{q,2}) \nonumber \\
&+ (\hat{\xi}_1 - \hat{\xi}_2) \big].
\end{align}

Let $\Delta \alpha_q = \alpha_{q,1} - \alpha_{q,2}$ denote the differential encryption mask and $\hat{\xi}_{\mathrm{diff}} = \hat{\xi}_1 - \hat{\xi}_2$ denote the differential physical noise operator. To evaluate error correction without exposing logical data to the intermediate router, we exploit the translational symmetry of the GKP lattice to decompose the full-period mask $\alpha_{q, j} \sim \mathcal{U}[-\sqrt{\pi}, \sqrt{\pi})$ into a discrete logical Pauli key and a continuous analog slack:
\begin{equation}
\alpha_{q, j} = \alpha_{L, j} + \alpha_{\mathrm{analog}, j} \pmod{2\sqrt{\pi}},
\end{equation}
where $\alpha_{L, j} \in \{0, \sqrt{\pi}\} \pmod{2\sqrt{\pi}}$ corresponds to a discrete logical Pauli-$\bar{X}$ mask, and $\alpha_{\mathrm{analog}, j} \in [-\sqrt{\pi}/2, \sqrt{\pi}/2)$ is the continuous sub-lattice displacement.

\paragraph{Key Management Architecture}
Users $j \in \{1, 2\}$ generate independent continuous encryption masks $\alpha_{q, j} \sim \mathcal{U}[-\sqrt{\pi}, \sqrt{\pi})$. Each mask decomposes into a discrete logical Pauli key $\alpha_{L, j} \in \{0, \sqrt{\pi}\}$ and continuous analog slack $\alpha_{\mathrm{analog}, j} \in [-\sqrt{\pi}/2, \sqrt{\pi}/2)$: $\alpha_{q, j} = \alpha_{L, j} + \alpha_{\mathrm{analog}, j} \pmod{2\sqrt{\pi}}$.

Clients transmit continuous analog differential key $\Delta \alpha_{\mathrm{analog}} = \alpha_{\mathrm{analog}, 1} - \alpha_{\mathrm{analog}, 2}$ to intermediate routing vertices, while discrete logical key $\Delta \alpha_L = \alpha_{L, 1} - \alpha_{L, 2}$ is kept secret. In multi-stage network cascades, analog keys combine linearly at each vertex ($\Delta \alpha_{\mathrm{analog}}^{(k)} = \sum_{j} \alpha_{\mathrm{analog}, j}$), preserving key separation across arbitrary depths.

The classical control unit evaluates the analog syndrome relative to the target GKP lattice $\Lambda = \sqrt{\pi}\mathbb{Z}$. Subtracting the known analog key $\Delta \alpha_{\mathrm{analog}}$ and applying a modular reduction map $\mathcal{M}_{\sqrt{\pi}}(x) = x - \sqrt{\pi} \left\lfloor \frac{x}{\sqrt{\pi}} + \frac{1}{2} \right\rfloor \in [-\sqrt{\pi}/2, \sqrt{\pi}/2)$, the routing vertex computes:
\begin{equation}
m_{\mathrm{syndrome}} = \mathcal{M}_{\sqrt{\pi}}\left( \sqrt{2} m_q - \Delta \alpha_{\mathrm{analog}} \right).
\end{equation}
Defining $\Delta s = s_1 - s_2$ and $\Delta \mu = \mu_1 - \mu_2$, the syndrome algebra evaluates to:
\begin{align}
m_{\mathrm{syndrome}} &= \mathcal{M}_{\sqrt{\pi}}\left( (2\Delta s + \Delta \mu)\sqrt{\pi} + \Delta \alpha_L + \hat{\xi}_{\mathrm{diff}} \right) \nonumber \\
&= \mathcal{M}_{\sqrt{\pi}}\left( \hat{\xi}_{\mathrm{diff}} \right).
\end{align}
Because $(2\Delta s + \Delta \mu)\sqrt{\pi}$ and the secret discrete key $\Delta \alpha_L$ are integer multiples of $\sqrt{\pi}$, they map  to zero under $\mathcal{M}_{\sqrt{\pi}}$.

\paragraph{Voronoi Cell Preservation and Cryptographic Hiding}

Provided the accumulated physical displacement noise remains  within the primary GKP Voronoi cell, $|\hat{\xi}_{\mathrm{diff}}| < \sqrt{\pi}/2$, the modular map evaluates to $m_{\mathrm{syndrome}} = \hat{\xi}_{\mathrm{diff}}$ , allowing the intermediate routing node to extract physical displacement noise and execute feed-forward corrections without learning the logical payload $\Delta \mu$.

Simultaneously, homodyne-outcome hiding against the intermediate router is preserved. Prior to modular reduction, the observable is masked by the secret discrete key $\Delta \alpha_L \in \{0, \sqrt{\pi}\}$, which acts as a discrete quantum one-time pad. This decomposition satisfies the continuous Fourier hiding constraints of Theorem~\ref{thm:crypto_hiding} while decoupling network error-correction kinematics from the underlying logical payload.

\end{document}